\newif\ifarxiv
\keywords{Algorithmic Game Theory, Online Algorithms}
\title[Forecasting for Swap Regret for All Downstream Agents]{Forecasting for Swap Regret for All Downstream Agents}
\author{Aaron Roth}
\email{aaroth@cis.upenn.edu}
\affiliation{
  \institution{University of Pennsylvania}
  \city{Philadelphia}
  \state{PA}
  \country{USA}
}
\author{Mirah Shi}
\email{mirahshi@seas.upenn.edu}
\affiliation{
  \institution{University of Pennsylvania}
  \city{Philadelphia}
  \state{PA}
  \country{USA}
}
\DeclareMathOperator*{\E}{\mathbb{E}}
\newcommand{\eps}{\varepsilon}
\newcommand{\1}{\mathbbm{1}}
\newcommand{\cA}{\mathcal{A}}
\newcommand{\cC}{\mathcal{C}}
\newcommand{\cE}{\mathcal{E}}
\newcommand{\cU}{\mathcal{U}}
\newcommand{\cV}{\mathcal{V}}
\newcommand{\cP}{\mathcal{P}}
\newcommand{\R}{\mathbb{R}}
\newcommand{\<}{\langle}
\renewcommand{\>}{\rangle}
\DeclareMathOperator*{\argmax}{arg\,max}
\DeclareMathOperator*{\argmin}{arg\,min}
\DeclarePairedDelimiter\floor{\lfloor}{\rfloor}
\newtheorem{definition}{Definition}
\newtheorem{lemma}{Lemma}
\newtheorem{theorem}{Theorem}
\newtheorem{remark}{Remark}
\newtheorem{observation}{Observation}
\newtheorem{assumption}{Assumption}
\newenvironment{proofof}[1]{\par{\noindent \textit{Proof of #1.}}}{\hspace*{\fill} $\qed$ \par}
\begin{document}

\ifarxiv
\title{Forecasting for Swap Regret for All Downstream Agents}
\author{Aaron Roth}
\author{Mirah Shi}
\affil{Department of Computer and Information Sciences, University of Pennsylvania}
\fi

\ifarxiv
\maketitle
\fi

\begin{abstract}
We study the problem of making predictions so that downstream agents who best respond to them will be guaranteed diminishing swap regret, \emph{no matter what their utility functions are}. It has been known since \citet{foster1997calibrated} that agents who best-respond to calibrated forecasts have no swap regret. Unfortunately, the best known algorithms for guaranteeing calibrated forecasts in sequential adversarial environments do so at rates that degrade exponentially with the dimension of the prediction space (e.g. the number of actions). Thus the calibration approach is substantially worse, in terms of swap regret \emph{rates}, compared to the approach of running a bespoke swap-regret-minimization algorithm for each agent --- an approach that requires knowing the agent utility functions. In this work, we show that by making predictions that are not calibrated, but are unbiased subject to a carefully selected collection of events,  we can guarantee arbitrary downstream agents diminishing swap regret at rates that substantially improve over the rates that result from calibrated forecasts --- while maintaining the appealing property that our forecasts give guarantees for \emph{any} downstream agent, without our forecasting algorithm needing to know their utility function. 

We give separate results in the ``low'' (1 or 2) dimensional setting and the ``high'' ($> 2$) dimensional setting. In the low dimensional setting, we show how to make predictions such that all agents who best respond to our predictions have diminishing swap regret --- in $1$ dimension, at the optimal $O(\sqrt{T})$ rate. In the high dimensional setting we show how to make forecasts that guarantee regret scaling at a rate of $O(T^{2/3})$ (crucially, a dimension independent exponent), under the assumption that downstream agents \emph{smoothly} best respond, and at the optimal rate of $O(\sqrt{T})$ under a perhaps less realistic behavioral assumption. Our results stand in contrast to rates that derive from agents who best respond to calibrated forecasts, which have an exponential dependence on the dimension of the prediction space. 
\end{abstract}

\ifarxiv
\else
\maketitle
\fi

\section{Introduction}
When acting in an adversarial environment, a popular way to evaluate the performance of a decision-making agent is through the lens of \emph{regret}: Given the sequence of outcomes actually realized, and the sequence of actions chosen by the decision making agent, how much better could they have done had they followed some other policy rather than what the actually did? External regret compares the agent's utility to a very simple benchmark: the utility they could have obtained had they played a \emph{constant} policy, that played the same fixed action at every round. Swap regret compares the agent's utility to a more demanding benchmark: the utility they could have obtained if, counterfactually, they had been able to go back and modify their play using an arbitrary \emph{swap function} mapping actions to replacement actions in any consistent fashion. Swap regret is closely linked to correlated equilibrium --- in particular, if all agents in a game have no swap regret, then the empirical distribution over their play is a correlated equilibrium \citep{foster1997calibrated}.

There are efficient algorithms that individual agents can use to guarantee themselves no swap regret in adversarial environments (e.g. \citet{blum2007internal}). But these algorithms require some sophistication to run, and must be run separately for each agent, as they are dependent on the agent's utility function. It would be compelling if it were possible to make public forecasts of the payoff relevant state, such that less sophisticated agents could simply best respond to these forecasts (as if they were correct), and be guaranteed that they will have no swap regret. These forecasts would be valuable universally to all agents, independently of what their utility functions are.  

A candidate solution to this problem is \emph{calibration}, a ``self-consistency'' measure of forecast quality. Informally, calibrated predictions must be unbiased conditional on the value of the prediction itself. It is possible to make predictions in adversarial environments that satisfy the calibration condition \citep{foster1998asymptotic}, and agents who best respond to calibrated predictions obtain diminishing swap regret \citep{foster1997calibrated}. However, this approach suffers from a serious shortcoming. Even in $1$ dimension, the best known algorithms obtain calibration (and hence downstream swap regret) at a rate of $T^{2/3}$ \citep{foster1998asymptotic,okoroafor2023faster} --- worse than the $O(\sqrt{T})$ rates known to be optimal for swap regret \citep{blum2007internal}. In fact, it is known that obtaining $\sqrt{T}$ rates via calibration is impossible even in $1$ dimension: \citet{qiao2021stronger} show an $\Omega(T^{0.528})$ lower bound on $1$-dimensional calibration rates  in adversarial environments. The situation only gets worse in higher dimensions --- the best known rates 
for $d$-dimensional forecasts scale as $T^{2d/(2d+1)}$ \citep{foster1997calibrated} (also see \citet{roth2023learning}). 
A related notion of \emph{smooth calibration} \citep{kakade2008deterministic,foster2018smooth} has similar guarantees under the assumption that downstream agents smoothly best respond (and has other desirable properties), but is not known to be obtainable at faster rates than traditional calibration. This suggests that we must look elsewhere if we wish to provide forecasts which can guarantee arbitrary downstream agents strong regret guarantees at reasonable rates.

Recent work of \citet{kleinberg2023ucalibration} accomplishes exactly this. They abandon swap regret, and consider the simpler goal of guaranteeing downstream agents low \emph{external} regret.  They show that a weaker measure of forecast quality called ``U-calibration''---corresponding to minimizing regret over a restricted collection of proper scoring rules---suffices for external regret minimization for arbitrary downstream agents. Remarkably, they show how to make predictions that guarantee diminishing external regret at a rate of $O(\sqrt{T})$ simultaneously for all downstream agents, thus bypassing the calibration lower bound. Given these results, one might wonder if sidestepping the shortcomings of calibration requires giving up on swap regret and settling for external regret. Put another way, is calibration necessary to achieve the \emph{stricter} goal of guaranteeing low \emph{swap} regret for all downstream agents?

\subsection{Our Contributions}

In this paper, we show that calibration is not necessary for this goal. Recall that predictions are calibrated if they are unbiased conditional on the value of their own prediction. We show that by requiring our predictions to be unbiased over an alternative collection of events, we can obtain substantially improved swap regret bounds simultaneously for all possible agents. For a single agent, we know that they will obtain no swap regret if they best respond to forecasts that are unbiased conditional on events defined by their own best response correspondence \citep{haghtalab2023calibrated,noarov2023highdimensional}, a weaker condition than full calibration---but one that depends on their utility function. At a high level, we take advantage of the structure of best response functions to define a collection of events such that forecasts that are unbiased with respect to these events suffice to guarantee no swap regret for all downstream agents. Using the algorithm of \citet{noarov2023highdimensional} we are able to produce forecasts that achieve this bias condition at rates that improve dramatically over the best known rates for calibration. 

In the $d=1$ dimensional setting we make forecasts that guarantee every best-responding downstream agent diminishing swap regret at the optimal rate of $\Tilde{O}(\sqrt{T})$ (Theorem \ref{thm:1d}). We can extend this technique to the $d=2$ dimensional setting while obtaining swap regret rates of $\Tilde{O}(T^{5/8})$  (Theorem \ref{thm:2d}). These results require no knowledge of either agent utility functions or the number of actions available to the agents.

We then turn to the higher dimensional setting, which presents new challenges (informally, that the structure of arbitrary best response functions projected onto a finite grid of predictions appears to become much more complex). In this higher $(d > 2)$ dimensional setting, we take a different approach, which requires two modifications to our setting. First, although we continue to assume no knowledge of the downstream agent's utility function, we now assume that we know an upper bound $k$ on the number of actions that they have available to them. Second, rather than assuming that agents exactly best respond to our forecasts, we assume that they \emph{smoothly} best respond --- informally that they choose an approximate best response using a mapping that is Lipchitz in our predictions. Quantal response e.g. satisfies this assumption; as we discuss, this is a behavioural assumption with extensive roots in the economics literature. Under these assumptions, we give an algorithm that for any dimension $d > 2$ guarantees that all downstream agents with at most $k$ actions have swap regret diminishing at a rate of $\Tilde{O}(T^{2/3})$  (Theorem \ref{thm:quantalresponse}), with only a polynomial dependence on the dimension $d$. Under a perhaps less realistic behavioral assumption (that agents play best responses with respect to a discretization of their utility function), we show how to guarantee all downstream agents regret at the optimal rate of $\Tilde{O}(\sqrt{T})$ for every dimension $d$ (Theorem \ref{thm:snappedutilities}). This stands in sharp contrast to calibration bounds, which have a dimension dependence in their exponent. Our focus here is on achievable rates; as with calibration, the computational complexity of our algorithms scales exponentially with $d$, and achieving similar rates in high dimensional settings with computationally efficient algorithms is in our view the most compelling open problem coming out of this work. 

\subsection{Related Work}
The idea of sequential calibration dates to \citet{dawid1982well}, and \citet{foster1998asymptotic} were the first to show that it is possible to maintain calibrated forecasts in a sequential adversarial setting. \citet{foster1997calibrated} connected calibration to decision making in a game theoretic context, and showed that agents who best respond to calibrated forecasts of the utilities of their actions obtain diminishing internal (equivalently swap) regret and hence converge to correlated equilibrium. The correct ``rates'' at which calibration error can be driven to $0$ in an online setting has been an important open question; the best known rates in $1$ dimension are $T^{2/3}$ and degrade exponentially with the dimension \citep{okoroafor2023faster}. In $1$ dimension, it is known that it is impossible to obtain $\sqrt{T}$ rates --- \citet{qiao2021stronger} show a lower bound of $\Omega(T^{0.528})$. This is despite the fact that it is known that swap regret for a single agent can be obtained at a rate of $O(\sqrt{T})$ \citep{blum2007internal}. For a more detailed overview on calibration, we refer the reader to Appendix \ref{app:calibration}.

\citet{kakade2008deterministic} and \citet{foster2018smooth} introduce \emph{smooth} calibration, and give --- as we do --- guarantees for agents who smoothly best respond. The main benefit of smooth calibration is that it can be guaranteed with deterministic algorithms in adversarial environments (unlike calibration, which requires randomization) --- and that players who smoothly best respond to smoothly calibrated forecasts converge to Nash (rather than correlated) equilibrium. The algorithms for smooth calibration do not converge at rates faster than the best known rates for calibration, however. 

The most closely related work is \citet{kleinberg2023ucalibration}, who take the perspective of a forecaster whose goal is to guarantee low external regret for arbitrary downstream agents. They define ``U-calibration" as a measure of forecast quality that is necessary and sufficient to achieve this goal and provide algorithms that minimize U-calibration error. In comparison, our work gives stronger guarantees than low external regret, and we consider general $d$-dimensional prediction tasks, while their work focuses on 1-dimensional and multiclass predictions (i.e. predictions that are distributions over $d$ outcomes, which is a special case of the setting we consider). Subsequently to the initial publication of our paper, \citet{hu2024predict} extend our result to remove the dependence on the number of actions $k$ in the special $d=1$ dimensional case that is focal in \citet{kleinberg2023ucalibration}. In particular, their result shows that for $d=1$, it is possible to obtain diminishing \emph{swap} regret at the same rates at which \citet{kleinberg2023ucalibration} obtain diminishing external regret. We discuss in Appendix \ref{app:calibration} the relationship between swap regret and different measures of calibration error, which is what resolves a seeming contradiction in the results of \citet{kleinberg2023ucalibration} and the results of this paper and \citet{hu2024predict}.

This work is conceptually related to a recent line of work on \emph{decision calibration} \citep{zhao2021calibrating} and  \emph{omniprediction} that has the goal of learning a single predictor that is a sufficient statistic to minimize any downstream loss function in some family, with respect to some benchmark family of functions \citep{gopalan2022omnipredictors,gopalan2023loss,hu2023omnipredictors,globus2023multicalibrated}. Most of this literature focuses on the batch setting, but \citet{garg2024oracle} give omnipredictors in the online adversarial setting. The omniprediction literature is focused on $1$-dimensional binary-valued prediction, whereas we are interested in high-dimensional, real-valued prediction. \citet{noarov2023highdimensional} consider making high-dimensional predictions that guarantee diminishing swap regret to downstream agents. We take direct inspiration from their work and make essential use of their forecasting algorithm. The main distinction is that \citet{noarov2023highdimensional} tailors their predictions to the best response functions of particular downstream agents, whereas our goal is to give guarantees that hold simultaneously for \emph{all} downstream agents.

In our main result for high-dimensional settings, we require agents to smoothly best respond, i.e. to choose actions randomly with probability proportional to their utility. This is consistent with the literature on smooth calibration \citep{kakade2008deterministic,foster2018smooth}, but one may ask: is this natural behavior? To offer some justification, we note that our formulation of smooth best response fits into a broader theory of \textit{quantal choice} that studies probabilistic best responses, otherwise known as \textit{quantal responses} \citep{mcfadden1976quantalchoice, mckelvey1995quantalresponse}. The quantal choice model posits that agents are not perfectly rational; instead, they make slight errors in choosing their actions.

\section{Model and Preliminaries}

\subsection{Predictions and Decisions}

We consider a repeated interaction between a learner, an adversary, and downstream agents. We use $\cC \subseteq \R^d$ to denote a convex prediction space for a finite dimension $d$. Without loss of generality, we consider $\cC \subseteq [0,1]^d$ (rescaling the predictions space degrades our bounds by only a multiplicative constant). In rounds $t \in [T]$, the learner interacts with an adversary as follows: 
\begin{enumerate}
    \item The adversary chooses a distribution over outcomes $Y_t \in \Delta \cC$.
    \item The learner produces a distribution over predictions $p_t \in \Delta \cC$, from which a prediction $\hat{y}_t$ is sampled.
    \item The adversary reveals an outcome $y_t \sim Y_t$. 
\end{enumerate}
This interaction accumulates in a transcript $\pi_T$ of predictions $\hat{y}_1,...,\hat{y}_{t-1}$ and outcomes $y_1,...,y_T$. The learner's output is a (possibly randomized) function of previous predictions $\hat{y}_1,...,\hat{y}_{t-1}$ and outcomes $y_1,...,y_{t-1}$. In this setting, the adversary's choice of outcome can depend on previous predictions $\hat{y}_1,...,\hat{y}_{t-1}$ and outcomes $y_1,...,y_{t-1}$ but cannot depend on the learner's realization of randomness on round $t$.

We want to study the value of predictions for downstream agents. Upon observing a prediction $\hat{y}_t$, an agent chooses an action $a_t$ from a finite action set $\cA$. An agent's utility depends on the chosen action and the outcome. More specifically, when the outcome is $y$, the agent receives utility $u(a_t, y)$ according to a bounded utility function $u: \cA \times \cC \to [0,1]$. We will assume that utility functions are affine and Lipschitz-continuous in $y$. 

\begin{assumption}
    Fix a utility function $u: \cA \times \cC \to [0,1]$. We assume that for every action $a \in \cA$, $u(a, y)$ is affine in $y$, i.e. $u(a,y) = \<v_a, y\> + c_a$ for some $v_a \in \R^d, c_a \in \R$. Moreover, we assume that $u(a,y)$ is $L$-Lipschitz in $y$ in the $\ell_\infty$ norm: for any $y_1,y_2 \in \cC$, $|u(a, y_1) - u(a, y_2)| \leq L\|y_1 - y_2\|_\infty$.
\end{assumption}

\begin{remark}
    Note that by linearity of expectation this is only more general than considering expectation maximizing agents that have arbitrary utility functions over $d$ discrete outcomes, and states/forecasts $y \in \Delta [d]$ that correspond to probability distributions over these $d$ outcomes. Thus, our setting generalizes the settings of e.g. \citet{zhao2021calibrating,kleinberg2023ucalibration}.
\end{remark}

We will later on take advantage of the fact that it is not more restrictive to consider utility functions that are \textit{linear} in $y$. 
\begin{observation}
    Any utility function over a $d$-dimensional prediction space that is affine in $y$ can be equivalently expressed as a utility function over a $(d+1)$-dimensional prediction space that is linear in $y$. We simply set the $(d+1)^{st}$ coordinate of the outcome/prediction to always take value 1. This preserves the convexity of the outcome/prediction space.
\end{observation}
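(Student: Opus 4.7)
\begin{proofof}{Observation}
The plan is to give an explicit lifting. Given a utility function $u(a,y) = \<v_a, y\> + c_a$ that is affine in $y \in \cC \subseteq \R^d$, I would define the lifted prediction space
\[
\tilde \cC = \{(y,1) : y \in \cC\} \subseteq \R^{d+1},
\]
together with the lifted coefficient vectors $\tilde v_a = (v_a, c_a) \in \R^{d+1}$ and the lifted utility $\tilde u(a, \tilde y) = \<\tilde v_a, \tilde y\>$. A direct calculation then shows that for any $y \in \cC$ and $\tilde y = (y, 1) \in \tilde \cC$,
\[
\tilde u(a, \tilde y) = \<(v_a, c_a), (y, 1)\> = \<v_a, y\> + c_a = u(a, y),
\]
so $\tilde u$ agrees with $u$ under the bijection $y \leftrightarrow (y,1)$, and $\tilde u$ is by construction linear in its second argument.

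It remains to verify that $\tilde \cC$ is convex. This follows because $\tilde \cC$ is the image of the convex set $\cC$ under the affine map $y \mapsto (y, 1)$, and affine images of convex sets are convex: for any $\tilde y_1 = (y_1, 1), \tilde y_2 = (y_2, 1) \in \tilde \cC$ and $\lambda \in [0,1]$,
\[
\lambda \tilde y_1 + (1-\lambda) \tilde y_2 = (\lambda y_1 + (1-\lambda) y_2, 1) \in \tilde \cC
\]
since $\lambda y_1 + (1-\lambda) y_2 \in \cC$ by convexity of $\cC$. There is no real obstacle here; the argument is a standard homogenization trick, and the only thing to check is that the appended constant coordinate is preserved under convex combinations, which it is because all points of $\tilde \cC$ share the same last coordinate.
\end{proofof}
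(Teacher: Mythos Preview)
Your proof is correct and follows exactly the approach the paper intends: the observation is stated without a separate proof in the paper, with the key idea (appending a constant coordinate equal to $1$) already embedded in the statement itself. Your explicit verification of the homogenization and the convexity of the lifted space simply fleshes out what the paper leaves implicit.
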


We measure agent regret by comparing the utility the agents obtain from the sequence of chosen actions against the counterfactual utility they would have received under another sequence of actions. We focus on swap regret, where the counterfactual action sequence is obtained by applying an action modification rule to the realized sequence of chosen actions. 

\begin{definition}[Swap Regret]
    Fix an agent with action set $\cA$ and utility function $u$. For a sequence of actions $a_1,...,a_T \in \cA$ and outcomes $y_1,...,y_T$, the agent's swap regret to an action modification rule $\phi: \cA \to \cA$ is: 
    \[
    Reg(u, \phi) = \frac{1}{T} \sum_{t=1}^T \left( u(\phi(a_t), y_t) - u(a_t, y_t) \right)
    \]
    More generally, when an agent plays distributions over actions $q_1,...,q_T \in \Delta\cA$, we measure swap regret to $\phi$ by:
    \[
    Reg(u, \phi) = \frac{1}{T} \sum_{t=1}^T \E_{a\sim q_t}[ u(\phi(a), y_t) - u(a, y_t) ]
    \]
    We say that an agent has swap regret $\alpha$ if for any action modification rule $\phi$, $Reg(u, \phi) \leq \alpha$. 
\end{definition}

The objective, from the learner's perspective, is to produce a sequence of predictions such that any agent responding to these predictions as if they were correct experiences low swap regret, regardless of their utility function.

\subsection{Conditionally Unbiased Predictions}
While we ultimately evaluate predictions based on the utility obtained by downstream agents who act according to our predictions, we will ask for predictions to satisfy some intermediate properties along the way---in particular, unbiasedness subject to a collection of events. Recall that  calibration can be summarized as unbiasedness conditional on events corresponding to the predictions themselves. We will make use of a relaxation which asks for unbiasedness subject to a more coarsely defined collection of events. Our algorithms will derive from the general framework developed by \citet{noarov2023highdimensional}. For the remainder of this section, we introduce the key tools we use. First, we formalize the notion of events and conditional bias.

\begin{definition}[Conditional Bias on Events]
    Fix a transcript $\pi_T$. Let $\cE$ be a collection of events, where each event is defined by a subsequence indicator function $E: \cC \to \{0, 1\}$. Let $n_T(E) = \sum_{t=1}^T \E_{\hat{y}_t\sim p_t}[E(\hat{y}_t)]$ denote the (expected) frequency of event $E$ up to round $T$. We say that $\pi_T$ has bias $\alpha$ conditional on $\cE$ for some $\alpha:\mathbb{R}\rightarrow \mathbb{R}$ if for all $E \in \cE$:
    \[
    \frac{1}{T} \left\| \sum_{t=1}^T \E_{\hat{y_t}\sim p_t}[E(\hat{y_t})(\hat{y_t} - y_t)] \right\|_\infty \leq \alpha(n_T(E))
    \]
\end{definition}

\citet{noarov2023highdimensional} show how to achieve unbiased predictions at favorable rates conditional on any collection of events. The algorithm requires the learner to randomize over a finite prediction space (we defer the details to \citet{noarov2023highdimensional}). We next define a discretized prediction space. 

\begin{definition}[$\eps$-net]
    We say $\cC_\eps \subset \cC$ is an $\eps$-net of $\cC$ in the $\ell_\infty$ norm if for every $y \in \cC$, there exists $y_\eps \in \cC_\eps$ such that $\|y - y_\eps\|_\infty \leq \eps$. Observe that we can always obtain an $\eps$-net of size $|\cC_\eps| \leq \left(\frac{1}{\eps}\right)^d$ by simply discretizing each coordinate into multiples of $\eps$. 
\end{definition}

When the learner produces predictions from a discretized prediction space $\cC_\eps$, \citet{noarov2023highdimensional} show that we can make predictions with low bias on any collection of events, with bias scaling logarithmically in the number of events. In our results, we will make direct use of the algorithm given by \citet{noarov2023highdimensional}. For reference, the algorithm is presented in Appendix \ref{app:alg}.

\begin{theorem} \label{thm:biasbound} \citep{noarov2023highdimensional}
    For a collection of events $\cE$ and convex prediction/outcome space $\cC$, there is an algorithm producing predictions $p_1,...,p_T \in \Delta \cC_\eps$ such that for any sequence of outcomes $y_1,...,y_T \in \cC$ chosen by the adversary, the  bias conditional on $\cE$ is at most:
    \[
    \alpha(n_T(E)) \leq O\left( \frac{\ln(d|\cE|T) + \sqrt{\ln(d|\cE|T) \cdot n_T(E) }}{T} + \eps \right) \leq O\left( \sqrt{\frac{\ln(d|\cE|T)}{T}} + \eps \right)
    \]
    For $\eps \leq 1/\sqrt{T}$, the bias is at most $O\left( \sqrt{\frac{\ln(d|\cE|T)}{T}} \right)$.
    The algorithm can be implemented with per-round running time scaling polynomially in $d$, $|\cE|$, and $T$.
\end{theorem}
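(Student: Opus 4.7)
The proof would follow the standard exponential-potential plus minimax template that underlies the high-dimensional multicalibration framework of \citet{noarov2023highdimensional}. The plan is to introduce, simultaneously for every event $E \in \cE$, every coordinate $j \in [d]$, and every sign $\sigma \in \{-1,+1\}$, the signed accumulated bias
\[
Z^{E,j,\sigma}_t \;=\; \sigma \sum_{s=1}^t \E_{\hat{y}_s \sim p_s}\!\left[E(\hat{y}_s)\,(\hat{y}_{s,j} - y_{s,j})\right],
\]
and to control all $2d|\cE|$ of these processes at once through the single exponential potential $\Phi_t = \sum_{E,j,\sigma} \exp(\eta \, Z^{E,j,\sigma}_t)$ for a learning rate $\eta > 0$ to be tuned later.

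With the potential in hand, I would view round $t$ as a two-player zero-sum game in which the learner picks $p_t \in \Delta\cC_\eps$, the adversary picks $Y_t \in \Delta\cC$, and the payoff is the expected one-step potential increment $\E[\Phi_t - \Phi_{t-1} \mid \pi_{t-1}]$. Because $\cC_\eps$ is finite and $\Delta\cC$ is compact and convex, Sion's minimax theorem applies, and I would swap the order of play so that the adversary commits first. For any adversarial $Y_t$, the central step is to exhibit a learner distribution $p_t$, obtained as a fixed point of the natural best-response map on $\Delta\cC_\eps$, such that, via a second-order Taylor expansion of $\exp(\cdot)$ valid because each per-round increment is bounded in $[-1,1]$, one obtains
\[
\E[\Phi_t \mid \pi_{t-1}] \;\le\; \Phi_{t-1}\cdot\bigl(1 + O(\eta^2) + O(\eta \eps)\bigr).
\]
The $\eps$ slack accounts for the fact that $p_t$ must be supported on the discretization $\cC_\eps$ rather than exactly matching $\E[Y_t]$, and this is the one place where the $\eps$-net is genuinely used.

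Iterating gives $\Phi_T \le 2d|\cE|\cdot\exp\bigl(O(\eta^2 T + \eta \eps T)\bigr)$. Since each summand lower-bounds $\Phi_T$, for every fixed $(E,j,\sigma)$ I get $\eta Z^{E,j,\sigma}_T \le \ln(2d|\cE|) + O(\eta^2 T + \eta \eps T)$, and the max over $\sigma$ and $j$ yields the $\ell_\infty$ bias bound. To recover the Freedman/Bernstein-style form stated in the theorem rather than just the worst-case $\sqrt{\ln(d|\cE|T)/T}$ rate, I would either (i) apply the doubling trick over a logarithmic grid of guesses for $n_T(E)$ together with a union bound, or (ii) sharpen the Taylor estimate so that the effective quadratic term is $\eta^2 \sum_s \E[E(\hat y_s)] = \eta^2 n_T(E)$ rather than $\eta^2 T$. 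Either route delivers
\[
\alpha(n_T(E)) \;=\; O\!\left(\frac{\ln(d|\cE|T) + \sqrt{\ln(d|\cE|T)\,n_T(E)}}{T} + \eps\right),
\]
and the uniform bound $O(\sqrt{\ln(d|\cE|T)/T} + \eps)$ follows by using $n_T(E) \le T$.

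The main obstacle in my view is the per-round fixed-point step: because the event indicator $E(\hat y_t)$ depends on the prediction $\hat y_t$ itself, one cannot simply set $p_t$ equal to a rounding of $\E[Y_t]$ and be done. The existence of a suitable $p_t$ must be established via a Brouwer-style fixed-point argument on $\Delta \cC_\eps$ (which is the reason the discretization is needed in the first place), and the algorithmic claim, per-round complexity polynomial in $d$, $|\cE|$, and $T$, relies on this fixed point being computable at each round by solving a convex program over $\Delta\cC_\eps$ of size polynomial in $|\cC_\eps|$, $|\cE|$, and $d$.
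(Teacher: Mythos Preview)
The paper does not contain a proof of this theorem: it is quoted verbatim as a black-box result from \citet{noarov2023highdimensional}, with the explicit remark ``we defer the details to \citet{noarov2023highdimensional}.'' There is therefore nothing in the present paper to compare your proposal against.

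That said, your sketch is a faithful high-level outline of the exponential-potential-plus-minimax machinery that underlies the cited result, and the ingredients you name (signed bias processes indexed by $(E,j,\sigma)$, Sion's minimax theorem, a fixed-point argument over $\Delta\cC_\eps$, and a Freedman-style refinement to get the $n_T(E)$-dependent bound) are the right ones. One quibble: the per-round computational claim in the theorem is polynomial in $d$, $|\cE|$, and $T$, whereas your last sentence references ``size polynomial in $|\cC_\eps|$''; since $|\cC_\eps|$ can be $(1/\eps)^d$, this is not the same thing, and the actual algorithm in \citet{noarov2023highdimensional} avoids enumerating $\cC_\eps$ by solving an optimization over $\cC$ directly (using convexity) and then rounding. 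If you intend to reconstruct the proof rather than cite it, that is the step to be careful with.
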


\section{Predictions in Low Dimensions}\label{sec:low}

In this section we show how to use the framework of unbiased prediction to produce predictions for all possible agents in the one- and two-dimensional case (Theorem \ref{thm:1d} and Theorem \ref{thm:2d} respectively). To accomplish this, we build on a key result of \citet{noarov2023highdimensional} connecting conditionally unbiased predictions to the swap regret of downstream agents. For a particular instantiation of events---concretely, the \textit{best response correspondences} of agents---conditional bias has a natural implication for  swap regret. Next we define an agent's best response and their corresponding best response events.

\begin{definition}[Best Response]
    Consider an agent with utility function $u$ over an action set $\cA$ and prediction space $\cC$. The agent's best response to $y \in \cC$ according to their utility function $u$ is the action $a = BR(u, y)$ where $BR(u, y) = \argmax_{a\in\cA} u(a, y)$. 
\end{definition}

\begin{definition}[Best Response Events]
    Fix an action set $\cA$ and utility function $u$. For any action $a \in \cA$, we define its corresponding best response event:
    \[
    E_{u, a}(y) = \1[BR(u, y) = a]
    \]
\end{definition}

A sufficient condition for minimizing swap regret for any agent is minimizing bias conditional on their best response events. More accurately, an agent with action set $\cA$ and utility function $u$ can achieve low swap regret when they best respond to predictions that are unbiased subject to the collection of events $\cE_u = \{E_{u,a}\}_{a\in\cA}$. Informally, this unbiasedness condition ensures that for every action $a$, agents' estimates of their utilities are (on average) correct whenever they choose to take action $a$. Therefore, since agents pick their utility-maximizing action, they must be choosing the action that is best (on average) conditional on that choice --- and thus not have swap regret. We state the resulting swap regret bound in the next theorem but defer further details of the argument to \citet{noarov2023highdimensional}. 

\begin{theorem} \label{thm:swapregretbound} \citep{noarov2023highdimensional}
    Fix a transcript $\pi_T$. Consider an agent with action set $\cA$ and utility function $u$ who, at every round $t\in[T]$, takes action $a_t = BR(u, \hat{y}_t)$, where $\hat{y}_t \sim p_t$. Let $E_{u,a}$ be the best response event corresponding to $u$ and $a \in \cA$. Then, if $\pi_T$ has bias $\alpha$ conditional on events $\cE$ such that $\{E_{u,a}\}_{a\in\cA} \subseteq \cE$, the agent has swap regret bounded by:
    \[
    \max_{\phi:\cA\to\cA} Reg(u, \phi) \leq 2L\sum_{a\in\cA} \alpha(n_T(E_{u,a}))
    \]
    For any concave $\alpha$, this is at most: $\max_{\phi:\cA\to\cA} Reg(u, \phi) \leq 2L|\cA| \alpha(T/|\cA|)$. In particular, if we plug in the bound obtained by Theorem \ref{thm:biasbound}, then we have: 
    \[
    \max_{\phi:\cA\to\cA} Reg(u, \phi) \leq O\left( L \sqrt{\frac{|\cA|\ln(d|\cE|T)}{T}} \right)
    \]
\end{theorem}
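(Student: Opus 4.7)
The plan is to leverage the affine structure of $u$ to relate the regret, which depends on the realized outcomes $y_t$, to a counterfactual expression depending on the forecasts $\hat y_t$, where the best response property makes the counterfactual sign-definite, and the discrepancy is controlled by the conditional bias.

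Concretely, I would start by rewriting swap regret by partitioning over the action played. Since $a_t = BR(u, \hat y_t)$, the indicator $\1[a_t = a]$ is exactly $E_{u,a}(\hat y_t)$, so
\[
T \cdot Reg(u, \phi) = \sum_{a\in\cA} \sum_{t=1}^T \E_{\hat y_t \sim p_t}\bigl[E_{u,a}(\hat y_t)\bigl(u(\phi(a), y_t) - u(a, y_t)\bigr)\bigr].
\]
Using $u(a,y) = \<v_a, y\> + c_a$, I would then add and subtract the analogous quantity with $\hat y_t$ in place of $y_t$ to decompose each summand as
\[
E_{u,a}(\hat y_t)\bigl(u(\phi(a), \hat y_t) - u(a, \hat y_t)\bigr) + \bigl\<v_{\phi(a)} - v_a,\, E_{u,a}(\hat y_t)(y_t - \hat y_t)\bigr\>.
\]

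Next I would argue each piece separately. For the first piece, whenever $E_{u,a}(\hat y_t) = 1$ we have $a = BR(u, \hat y_t)$, so $u(\phi(a), \hat y_t) \leq u(a, \hat y_t)$ and this term is nonpositive. For the second piece, I would apply Hölder's inequality (inside the expectation and sum) to get
\[
\biggl|\bigl\<v_{\phi(a)} - v_a,\, \textstyle\sum_t \E[E_{u,a}(\hat y_t)(y_t - \hat y_t)]\bigr\>\biggr| \;\leq\; \|v_{\phi(a)} - v_a\|_1 \cdot \Bigl\|\textstyle\sum_t \E[E_{u,a}(\hat y_t)(\hat y_t - y_t)]\Bigr\|_\infty.
\]
The $\ell_\infty$-Lipschitz assumption on $u(a,\cdot)$ gives $\|v_a\|_1 \leq L$, hence $\|v_{\phi(a)} - v_a\|_1 \leq 2L$, and the rightmost factor is at most $T\cdot\alpha(n_T(E_{u,a}))$ by the definition of bias conditional on $\cE \supseteq \{E_{u,a}\}_{a\in\cA}$. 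Dividing by $T$ and summing over $a$ yields the claimed bound $Reg(u,\phi) \leq 2L \sum_{a\in\cA} \alpha(n_T(E_{u,a}))$.

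For the two stated corollaries, I would note that $\sum_{a\in\cA} n_T(E_{u,a}) = T$ since exactly one best response event fires each round, so concavity of $\alpha$ and Jensen's inequality give $\sum_a \alpha(n_T(E_{u,a})) \leq |\cA| \cdot \alpha(T/|\cA|)$. Plugging the rate from Theorem \ref{thm:biasbound} into $\alpha$ then yields the $O\bigl(L\sqrt{|\cA|\ln(d|\cE|T)/T}\bigr)$ bound after simplification. The proof is largely mechanical once the key identity (adding and subtracting the counterfactual in $\hat y_t$) is in place; the only slightly delicate point, which I would double-check, is that the Lipschitz bound in the $\ell_\infty$ norm correctly yields an $\ell_1$ bound on the coefficient vectors $v_a$, which pairs dually with the $\ell_\infty$ bias bound provided by Theorem \ref{thm:biasbound}.
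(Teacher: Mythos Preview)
Your proposal is correct and mirrors the informal argument the paper sketches just before the theorem statement (``unbiasedness ensures agents' utility estimates are correct on average whenever they choose action $a$, hence best responding yields no swap regret''). The paper does not give its own formal proof of this theorem; it explicitly defers the details to \citet{noarov2023highdimensional}, so there is nothing further to compare against. Your add-and-subtract decomposition, the nonpositivity of the counterfactual term via the best response property, and the H\"older/duality step pairing $\|v_{\phi(a)} - v_a\|_1 \leq 2L$ with the $\ell_\infty$ bias bound are exactly the expected argument, and your handling of the two corollaries via Jensen and substitution is correct.
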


This implies that given any set of utility functions $\cU$, running the algorithm of Theorem \ref{thm:biasbound} with the corresponding collection of best response events $\{E_{u,a} : u\in\cU, a\in\cA\}$ promises any agent regret at most $O\left( L \sqrt{|\cA| \ln(d|\cU||\cA|T)/T} \right)$, as long as their utility function belongs to $\cU$. 

Our present goal is to promise low swap regret with respect to \textit{any} utility function an agent may possess. We will approach this by considering the augmented collection of best response events corresponding to not just any particular set $\cU$, but the set of \textit{all possible} utility functions. If our predictions are unbiased subject to all possible best response events, then we can be certain that our predictions are unbiased subject to the best response events of any particular agent. However, recall that the bounds on conditional bias given by Theorem \ref{thm:biasbound} scale with the number of events. Thus, we must be careful to ensure that we define a reasonably sized collection of events. We show that for one- and two-dimensional prediction spaces, the collection of best response events is not too large and hence we can derive meaningful swap regret bounds.  

To do so, we will take advantage of a key structural property of the best response function---namely, that its level sets are convex.

\begin{lemma} \label{lem:convexregions}
    For any utility function $u$ that is affine in $y$ , the level set of the corresponding best response function $\{y : BR(u, y) = a\}$ is convex for all $a \in \cA$, i.e. for any $y_1, y_2 \in \cC$ and any $\lambda \in [0,1]$, if $BR(u, y_1) = BR(u, y_2) = a$, then $BR(u, \lambda y_1 + (1-\lambda) y_2) = a$. 
\end{lemma}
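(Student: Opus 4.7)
The plan is to exploit the affine structure of the utility function to write the level set of $BR(u, \cdot)$ as an intersection of halfspaces, which is automatically convex.

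First I would recall that by the assumption on $u$, for each action $a' \in \cA$ we have $u(a', y) = \langle v_{a'}, y \rangle + c_{a'}$. The condition $BR(u, y) = a$ is equivalent (modulo a tie-breaking rule) to the conjunction of inequalities $u(a, y) \geq u(a', y)$ over all $a' \in \cA$. For each fixed $a'$, this inequality rearranges to
\[
\langle v_a - v_{a'}, y \rangle + (c_a - c_{a'}) \geq 0,
\]
which defines a closed halfspace in $y$. Therefore
\[
\{ y \in \cC : BR(u, y) = a \} = \cC \cap \bigcap_{a' \in \cA} \{ y : \langle v_a - v_{a'}, y \rangle + (c_a - c_{a'}) \geq 0 \},
\]
an intersection of convex sets, and hence convex.

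To finish, I would verify the statement as written: given $y_1, y_2$ in the level set and $\lambda \in [0,1]$, each individual halfspace inequality is preserved under taking the convex combination $\lambda y_1 + (1-\lambda) y_2$ (since affine functions commute with convex combinations), so the combination also lies in every halfspace above, i.e.\ in the level set. Therefore $BR(u, \lambda y_1 + (1-\lambda) y_2) = a$.

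There is no serious obstacle here; the only minor subtlety is the tie-breaking convention in $\argmax$. If ties occur on the boundary between two halfspaces, one fixes a consistent deterministic tie-breaking rule (for instance, a lexicographic order on $\cA$); the resulting level set differs from the intersection of closed halfspaces only by the addition or removal of certain boundary faces, and the conclusion that each level set is convex (indeed, a convex polyhedron) remains unchanged. This is exactly the property we will leverage in Section~\ref{sec:low} to count and enumerate the possible best-response partitions of a low-dimensional prediction space.
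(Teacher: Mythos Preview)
Your proposal is correct and follows essentially the same approach as the paper: both arguments use the affine structure of $u(a,\cdot)$ to show that each inequality $u(a,y)\geq u(a',y)$ is preserved under convex combinations. The paper verifies this via a direct chain of equalities and inequalities, while you phrase it as an intersection of halfspaces, but the content is the same.
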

\begin{proof}
    We use the definition of an affine function to compute, for any alternate action $a' \in \cA$:
    \begin{align*}
        u(a, \lambda y_1 + (1-\lambda) y_2) &= \<v_a, \lambda y_1 + (1-\lambda) y_2\> + c_a \\
        &= \lambda \<v_a, y_1\> + (1-\lambda) \<v_a, y_2\> + c_a \\
        &= \lambda u(a, y_1) + (1-\lambda) u(a, y_2) + c_a - \lambda c_a - (1-\lambda) c_a \\
        &= \lambda u(a, y_1) + (1-\lambda) u(a, y_2) \\
        &\geq \lambda u(a', y_1) + (1-\lambda) u(a', y_2) \\
        &= u(a', \lambda y_1 + (1-\lambda) y_2)
    \end{align*}
    where the inequality follows by definition of the best response function and the last line follows from applying the same transformation. Hence, $a = BR(u, \lambda y_1 + (1-\lambda) y_2)$, proving the claim. 
\end{proof}

In other words, the set of predictions that activate a best response event form a convex set. As a result, we can enumerate all possible best response events (across all possible utility functions) by enumerating all \textit{convex hulls} of predictions in our prediction space. Our results in this section proceed by bounding the number of such convex hulls when the prediction space is an $\eps$-net of $\cC$ (as required by the algorithm of Theorem \ref{thm:biasbound}). Given a bound on the number of such convex hulls,  an application of Theorem \ref{thm:swapregretbound} bounds the swap regret for any downstream agent. 

The $d=1$ case is particularly elegant: convex hulls over any finite set of grid points on the unit interval $\cC_\eps \subseteq [0,1]$ are simply intervals with endpoints $y_1,y_2 \in \cC_\eps$, i.e. they can be specified by two points in $\cC_\eps$. Hence, the number of convex hulls---and thus, the number of events we must define---is only quadratic in the discretization parameter $\eps$. We show that setting $\eps$ to optimally trade off between the number of events and the precision of predictions, we can guarantee the optimal  swap regret rate of $O(\sqrt{T})$ (up to logarithmic factors) simultaneously for every downstream agent.

\begin{theorem}\label{thm:1d}
    Consider the prediction/outcome space $\cC = [0,1]$ and let $\cC_\eps$ be an $\eps$-net of $\cC$. Let $\cE = \{ E_{y_1,y_2} : y_1, y_2 \in \cC_\eps \}$ be the collection of events $E_{y_1,y_2}(y) = \1[y \in [y_1, y_2]]$. Then, using the algorithm given by \citet{noarov2023highdimensional} and setting $\eps=1/\sqrt{T}$, an agent who best responds according to action set $\cA$ and utility function $u$  obtains swap regret bounded by:
    \[
    \max_{\phi: \cA \to \cA} Reg(u, \phi) \leq O\left( L|\cA| \sqrt{ \frac{\ln T}{T} } \right)
    \]
Moreover, the forecasts can be made with per-round running time polynomial in $T$. 
\end{theorem}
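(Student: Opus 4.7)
The plan is to combine Lemma \ref{lem:convexregions} with the forecasting algorithm of Theorem \ref{thm:biasbound} in the most direct way possible. In one dimension, a convex subset of $[0,1]$ is just an interval, so Lemma \ref{lem:convexregions} implies that for every affine utility function $u$ and every action $a \in \cA$, the best response event $E_{u,a}$ is the indicator of some interval in $[0,1]$. Because the learner's predictions live on the grid $\cC_\eps$, only the values of $E_{u,a}$ on $\cC_\eps$ affect either the conditional-bias computation in Theorem \ref{thm:biasbound} or the swap-regret argument in Theorem \ref{thm:swapregretbound}, and each such restriction coincides with some $E_{y_1,y_2} \in \cE$ by taking $y_1, y_2$ to be the smallest and largest grid points in the interval (if no grid point lies in the interval, the event is identically zero on $\cC_\eps$ and can be ignored). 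Hence, although $\cE$ is defined without reference to any particular agent, it effectively contains the best response events of \emph{every} possible downstream agent.

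The rest of the proof is bookkeeping. The grid has $|\cC_\eps| \leq 1/\eps + 1$ points, so $|\cE| \leq (1/\eps + 1)^2$; setting $\eps = 1/\sqrt{T}$ gives $|\cE| = O(T)$. Since $d = 1$, Theorem \ref{thm:biasbound} then produces a transcript $\pi_T$ satisfying
\[
\alpha(n_T(E)) \leq O\!\left(\sqrt{\frac{\ln T}{T}}\right)
\]
for every $E \in \cE$. In particular this bound holds for the best response events $\{E_{u,a}\}_{a \in \cA}$ of any downstream agent, so Theorem \ref{thm:swapregretbound} yields
\[
\max_{\phi:\cA \to \cA} Reg(u,\phi) \leq 2L \sum_{a \in \cA} \alpha(n_T(E_{u,a})) \leq O\!\left(L |\cA| \sqrt{\frac{\ln T}{T}}\right),
\]
which is the claimed regret bound. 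Per-round running time polynomial in $T$ follows because the algorithm of \citet{noarov2023highdimensional} runs in time polynomial in $d$, $|\cE|$, and $T$, and here $d=1$ and $|\cE| = O(T)$.

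I do not expect any real obstacle here: the one-dimensional case is precisely where the convex-level-set lemma cashes out cleanly into a polynomial-sized universal family of events, and the quantitative bound falls out of simply plugging in $|\cE| = O(T)$ to the general machinery. The only mildly subtle point worth flagging explicitly in the write-up is that we never need the best response events to literally equal events of $\cE$ as functions on all of $\cC$---only to agree on $\cC_\eps$---since both the bias and the agent's best response action are only ever evaluated at the randomized predictions $\hat{y}_t \in \cC_\eps$.
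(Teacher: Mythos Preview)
Your proposal is correct and follows essentially the same approach as the paper: use Lemma~\ref{lem:convexregions} to identify one-dimensional best-response events with intervals, observe that restricted to the grid $\cC_\eps$ every such event coincides with some $E_{y_1,y_2}\in\cE$, bound $|\cE|\le O(1/\eps^2)=O(T)$, and then plug into Theorems~\ref{thm:biasbound} and~\ref{thm:swapregretbound}. Your write-up is in fact slightly more careful than the paper's in flagging that only agreement on $\cC_\eps$ is needed and in explicitly deriving the polynomial running time from $|\cE|=O(T)$.
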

\begin{proof}
    As we remarked above, convex sets in one dimension are simply intervals. Therefore, given that our predictions lie on a finite $\eps$-net, it follows from Lemma \ref{lem:convexregions} that for any utility function $u$ and any action $a$, the best response event $E_{u,a}$ exactly corresponds to an event $E_{y_1,y_2} \in \cE$. That is, for any action $a \in \cA$, there exist $y_1,y_2 \in \cC_\eps$ such that for all $y \in \cC_\eps$, $BR(u, y) = a$ if and only if $E_{y_1,y_2}(y) = 1$. Thus to prove the regret bound, all we need is to determine $|\cE|$. Since $|\cC_\eps| \leq \frac{1}{\eps}$, we have that $|\cE| \leq \frac{1}{\eps^2}$. Plugging this into Theorem \ref{thm:biasbound} and using our setting of $\eps$, we have that the bias conditional on every event $E_{y_1,y_2} \in \cE$ is bounded by:
    \[
    \alpha(n_T(E_{y_1,y_2}) \leq O\left( \sqrt{ \frac{\ln(T/\eps^2)}{T} } + \eps \right) = O\left( \sqrt{ \frac{\ln T}{T} } \right)
    \]
 Thus by Theorem \ref{thm:swapregretbound}, any agent with utility function $u$ and action set $\cA$ has swap regret at most:
    \[
    \max_{\phi: \cA \to \cA} Reg(u, \phi) \leq O\left( L|\cA| \sqrt{ \frac{\ln T}{T} } \right)
    \]
\end{proof}

For $d = 1$, the set of all convex hulls over a discrete prediction space was particularly simple: the set of intervals over the same discretization. In higher dimensions, characterizing the set of all convex hulls of discrete points is considerably more difficult. The main ingredient for our result in the $d=2$ dimensional case is the following theorem shown by \citet{ivic1994digital}, which establishes the number of convex hulls of points in a two-dimensional grid space. Thus, taking our predictions to lie in such a grid space, the number of possible best response events directly follows from this result.

\begin{theorem}\label{fact:digitalpolygons}\citep{ivic1994digital}
    Consider an integer grid of size $m \times m$. Let $D(m)$ denote the number of convex polygons whose vertices have integer coordinates (i.e. number of convex hulls of grid points) that are distinct up to translation. Then, there exists a constant $C$ such that $\ln D(m) \leq C m^{2/3}$. 
\end{theorem}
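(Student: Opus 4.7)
The plan is to reduce the count of convex lattice polygons to a count of convex monotone lattice paths, and then bound each of those via Andrews' classical vertex bound. Up to translation, a convex polygon is determined by its cyclic sequence of edge vectors in $\mathbb{Z}^2$, which must sum to $\mathbf{0}$ and be in sorted angular order. After merging any collinear consecutive edges, each edge is uniquely a positive integer multiple $c_i$ of a primitive integer vector $p_i$ (one whose coordinate gcd is $1$), with the $p_i$ in strictly increasing angular order around the boundary. Since the polygon fits in $[0,m]^2$, the total horizontal displacement of the rightward edges is at most $m$ (and likewise for leftward, upward, downward), so the boundary decomposes into four monotone convex lattice arcs, one per quadrant of edge direction, each with horizontal and vertical extent at most $m$.

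It therefore suffices to bound the number $M(m)$ of monotone convex lattice arcs from $(0,0)$ to some $(X,Y)$ with $X, Y \leq m$; then $D(m) \leq M(m)^4$ and it is enough to show $\ln M(m) = O(m^{2/3})$. The first ingredient here is Andrews' theorem: such an arc uses at most $N = O(m^{2/3})$ distinct primitive edge directions. Informally, the sub-polygon whose edges are the primitive vectors $p_1,\ldots,p_N$ in sorted slope order is itself a convex lattice polygon, and a volume/convexity argument forces its area to be simultaneously $\Omega(N^3)$ (because the $p_i$ are distinct primitive vectors in strict slope order, so consecutive ones span lattice triangles of area $\geq 1/2$, and their accumulated span grows at least cubically) and $O(m^2)$ (because it sits inside a translate of $[-m,m]^2$).

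Given the vertex bound $N = O(m^{2/3})$, I would count arcs by first choosing the multiset of primitive directions $\{p_i\}$ used and then choosing the multiplicities $c_i \geq 1$ subject to $\sum_i c_i p_i \in [0,m]^2$. The primitive vectors in $[0,m]^2$ number $\Theta(m^2)$, but at most $N = O(m^{2/3})$ are used, and for each direction $p_i = (a_i, b_i)$ the multiplicity $c_i$ is constrained by $\sum_i c_i a_i \leq m$ and $\sum_i c_i b_i \leq m$. An entropy/generating-function argument — essentially bounding $\prod_i (1 + m/\max(a_i, b_i))$ over a near-optimal choice of primitive vectors with small norms — then gives $\ln M(m) = O(m^{2/3})$, with the $2/3$ exponent coming out of the same Andrews-type balance between number of directions and their magnitudes. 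The main obstacle, in my view, is Andrews' vertex bound: it is where the $2/3$ exponent is really purchased, and care is needed to pass from the vertex count to a polygon count without accumulating extra logarithmic factors in the exponent.
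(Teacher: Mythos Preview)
The paper does not prove this theorem; it simply quotes it from \citet{ivic1994digital} and uses it as a black box to bound $|\cE|$ in the $d=2$ case. There is therefore no ``paper's own proof'' to compare against.

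As for your sketch on its own merits: the outline is broadly in the spirit of the Bárány--Pach/Andrews circle of ideas and is the right shape for an upper bound. The decomposition into four monotone convex lattice arcs is standard and correct, and Andrews' $O(m^{2/3})$ vertex bound is indeed the source of the exponent. The part that is genuinely incomplete is the last step: bounding the number of monotone arcs by an entropy/generating-function argument is where the real work lies, and your description (``essentially bounding $\prod_i (1 + m/\max(a_i,b_i))$'') is a heuristic, not an argument. In the literature this step is handled by a careful partition-counting argument (the number of such arcs is tied to the number of integer partitions into distinct primitive parts fitting in the box), and getting the clean $O(m^{2/3})$ in the exponent without extra logarithmic losses requires more than the one-line product bound you suggest. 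Your informal justification of Andrews' bound (``area simultaneously $\Omega(N^3)$ and $O(m^2)$'') is also a sketch rather than a proof; the $\Omega(N^3)$ lower bound on the area of a convex lattice $N$-gon is itself a nontrivial lemma. So: right architecture, but the two load-bearing steps are asserted rather than proved.
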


\begin{remark}\label{rem:digitalpolygons}
    To count the total number of convex polygons $I(m)$ with integer vertices in the $m\times m$ integer grid, observe that there are at most $m^2$ possible translations of any polygon in the grid to another polygon in the grid with integer vertices. Thus, $I(m) \leq m^2 D(m)$. 
\end{remark}

\begin{theorem} \label{thm:2d}
    Consider a prediction space $\cC = [0,1]^2$ and let $\cC_\eps$ be an $\eps$-net of $\cC$ obtained by discretizing each coordinate to multiples of $\eps$. Let $\cP_\eps$ be the set of all convex polygons with vertices belonging to $\cC_\eps$. Define $E_{P}(y) = \1[y \in P]$ to be the event indicating whether $y$ lies in a polygon $P$. Then, using the algorithm of \citet{noarov2023highdimensional} with $\cE = \{ E_{P} \}_{P\in\cP_\eps}$ and $\eps = \frac{1}{T^{3/8}}$, an agent with any action set $\cA$ and utility function $u$ who best responds to the forecasts obtains swap regret bounded by:
    \[
    \max_{\phi: \cA \to \cA} Reg(u, \phi) \leq O\left( \frac{L|\cA| \sqrt{\ln(T)}}{T^{3/8}} \right)
    \]
\end{theorem}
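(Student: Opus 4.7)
The plan is to follow the template of the $d{=}1$ proof (Theorem \ref{thm:1d}), with the key geometric input now being the Andrews/Ivić-type count of lattice polygons in Theorem \ref{fact:digitalpolygons} instead of the trivial count of intervals. The first step is to argue that $\cE = \{E_P\}_{P\in\cP_\eps}$ is rich enough to contain (modulo agreement on grid points) every best response event. Fix any utility $u$ and action $a$. By Lemma \ref{lem:convexregions}, the level set $R_{u,a} = \{y \in \cC : BR(u,y) = a\}$ is convex. Let $S_{u,a} = R_{u,a} \cap \cC_\eps$ and $P = \mathrm{conv}(S_{u,a})$. Then $P \in \cP_\eps$ (its vertices are among $S_{u,a} \subseteq \cC_\eps$), and since $P \subseteq R_{u,a}$ by convexity, one checks $P \cap \cC_\eps = S_{u,a}$. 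Because predictions $\hat y_t$ are drawn from $\cC_\eps$, this gives $E_{u,a}(\hat y_t) = E_P(\hat y_t)$ pointwise on the support, so conditional unbiasedness on $\cE$ transfers to conditional unbiasedness on every best response event.

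Second, I would bound $|\cE|$. The net $\cC_\eps$ is a rescaled $m \times m$ integer grid with $m = \lceil 1/\eps \rceil$, so by Theorem \ref{fact:digitalpolygons} together with Remark \ref{rem:digitalpolygons},
\[
|\cP_\eps| \le m^2 D(m) \le \left(\tfrac{1}{\eps}\right)^2 \exp\!\bigl(C\,\eps^{-2/3}\bigr),
\]
and hence $\ln|\cE| = O(\eps^{-2/3})$, where the $\ln(1/\eps)$ terms are dominated.

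Third, I plug into Theorem \ref{thm:biasbound}. For every $E_P \in \cE$,
\[
\alpha(n_T(E_P)) \le O\!\left( \sqrt{\frac{\ln(d|\cE|T)}{T}} + \eps \right) = O\!\left( \sqrt{\frac{\eps^{-2/3} + \ln T}{T}} + \eps \right).
\]
Setting $\eps = T^{-3/8}$ balances the two terms: $\eps^{-2/3} = T^{1/4}$, so the first term is $O(T^{-3/8}\sqrt{\ln T})$ (with the $\ln T$ coming from the $d|\cE|T$ factor inside the logarithm), and the second term is also $T^{-3/8}$. Finally, Theorem \ref{thm:swapregretbound} converts the per-event bias bound into the claimed swap regret bound of $O(L|\cA|\sqrt{\ln T}/T^{3/8})$ for every downstream agent, using that $\alpha$ is concave (so the $|\cA|$-many best response events cost only a factor $|\cA|$).

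The main obstacle is really the first step: verifying that the coarse family $\cP_\eps$ — defined in terms of all convex polygons on the grid rather than agent-specific best response regions — captures every agent's best response partition on $\cC_\eps$. Everything else is plugging Theorem \ref{fact:digitalpolygons} into Theorem \ref{thm:biasbound} and optimizing $\eps$; the jump from the $d=1$ rate of $T^{-1/2}$ to $T^{-3/8}$ here is entirely driven by the $\exp(C m^{2/3})$ blow-up in the number of convex lattice polygons, and extending this approach to $d > 2$ would require analogous (much harder) counts of convex lattice polytopes, which is precisely why the paper switches strategies in the high-dimensional sections.
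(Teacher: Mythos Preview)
Your proposal is correct and follows essentially the same approach as the paper: use Lemma \ref{lem:convexregions} to identify each best-response event (restricted to the grid $\cC_\eps$) with a convex lattice polygon in $\cP_\eps$, bound $\ln|\cE|$ via Theorem \ref{fact:digitalpolygons} and Remark \ref{rem:digitalpolygons}, plug into Theorem \ref{thm:biasbound}, optimize $\eps = T^{-3/8}$, and conclude with Theorem \ref{thm:swapregretbound}. If anything, your first step --- explicitly taking $P = \mathrm{conv}(R_{u,a}\cap\cC_\eps)$ and verifying $P\cap\cC_\eps = R_{u,a}\cap\cC_\eps$ --- is more careful than the paper, which leaves this identification implicit by analogy with the $d=1$ case.
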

\begin{proof}
    By construction of $\cC_\eps$, our predictions lie in a $\frac{1}{\eps} \times \frac{1}{\eps}$ grid, and so $|\cE| = I(\frac{1}{\eps})$. Plugging into Theorem \ref{thm:biasbound}, it follows from Theorem \ref{fact:digitalpolygons} and Remark \ref{rem:digitalpolygons} that the bias conditional on every event $E_P \in \cE$ is at most:
    \begin{align*}
        \alpha(n_T(E_P)) \leq O\left( \sqrt{\frac{\ln(I(\frac{1}{\eps})T)}{T}} + \eps \right) \leq O\left( \sqrt{\frac{\ln(\frac{T}{\eps^2} D(\frac{1}{\eps}))}{T}} + \eps \right) \leq O\left( \sqrt{\frac{\frac{1}{\eps^{2/3}}\ln(\frac{T}{\eps^2})}{T}} + \eps \right)
    \end{align*}
    For $\eps = \frac{1}{T^{3/8}}$, we get $\alpha(n_T(E_P)) \leq O\left( \frac{\sqrt{\ln(T)}}{T^{3/8}} \right)$. By Theorem \ref{thm:swapregretbound}, we can conclude that the agent has swap regret at most:
    \[
    \max_{\phi: \cA \to \cA} Reg(u, \phi) \leq O\left( \frac{L|\cA| \sqrt{\ln(T)}}{T^{3/8}} \right)
    \]
\end{proof}

\section{Predictions in Higher Dimensions} \label{sec:high}

In the previous section we showed that by taking advantage of nice structural properties of best response correspondences, we can---in one- and two-dimensional settings---straightforwardly enumerate the set of all best response events across all possible downstream utility functions.  Making predictions that are unbiased with respect to this set yields low swap regret for all possible downstream agents who best respond to forecasts. Turning now to the general setting, i.e. where we consider $d$-dimensional predictions, this approach will no longer be tractable. This is because our previous results crucially rely on the fact that in low dimensions, convex regions over grid points---and correspondingly best response events induced by predictions made over grid points---have \textit{boundedly low complexity}. Most saliently, convex regions in one dimension are specified by just two points. In two dimensions, already, convex regions on grid points are substantially more complex ---although fortunately, we can exploit a geometric result bounding the number of convex regions. In higher dimensions, we cannot in general expect to furnish a reasonably sized collection of events in this way. Naively, the number of convex hulls taken over $\left(\frac{1}{\eps}\right)^d$ discrete predictions is doubly exponential in $d$ (remember that we can only obtain bias that depends logarithmically in the number of events). 

Thus to solve the general case, we take a different approach: we enumerate all possible utility functions---up to some discretization, and then subsequently define events corresponding to agent responses for agents with utility functions within this discretization. Recall that the bias obtained by Theorem \ref{thm:biasbound} depends on the number of conditioning events. The space of utility functions is continuously large, and so in order to obtain reasonable bounds, we cannot discretize the utility functions too finely.

When we restrict our attention to this finite net of utility functions, we can only hope to approximate the utility function of any particular agent. This is problematic due to the fact that the best response function is \textit{discontinuous}; best response events corresponding to this discretized set of utility functions are not guaranteed to be close to  an agent's actual best responses, even if their utility function is close to that of one of the discretized utility functions. We illustrate this in the following lemma, which states that there are utility functions such that even as the difference between their payoffs approach 0, bias conditional on their best response events differ by a constant, and the cumulative swap regret of best responding agents differ by $\Omega(T)$. 

\begin{lemma}
    Fix $\delta \in (0,0.5)$. Consider a prediction and outcome space $\cC = [0,1]$. There exist utility functions $u, \Tilde{u}$ and a sequence of predictions $\hat{y}_1,...,\hat{y}_T$ and outcomes $y_1,...,y_T$ such that $|u(a, y) - \Tilde{u}(a, y)| \leq \frac{2\delta}{1+2\delta}$ for all $a\in\cA, y\in\cC$, but $\alpha(E_{u,a}) = 0$ while $\alpha(E_{\Tilde{u},a}) = 0.5 + \delta$, where $E_{u,a}$ and $E_{\Tilde{u},a}$ denote the best response events corresponding to $u$ and $\Tilde{u}$ respectively. Furthermore, an agent who takes action $a_t = BR(\Tilde{u}, \hat{y}_t)$ for all $t \in [T]$ accumulates average swap regret $\max_{\phi:\cA\to\cA} Reg(\Tilde{u},\phi) = 1$.
\end{lemma}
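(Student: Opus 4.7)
The plan is to exhibit an explicit pair of utility functions together with a constant prediction/outcome sequence that jointly witness all of the claimed properties.

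Set $\cA = \{a_1, a_2\}$ and take
\[
\tilde{u}(a_1, y) = y, \qquad \tilde{u}(a_2, y) = 1 - y,
\]
both of which are affine and $[0,1]$-valued. Let $\eta := \frac{2\delta}{1+2\delta}$ and define
\[
u(a_1, y) = (1 - \eta) y, \qquad u(a_2, y) = 1 - (1 - \eta) y,
\]
which are again affine and $[0,1]$-valued. A direct computation gives $|u(a_i, y) - \tilde{u}(a_i, y)| = \eta y \leq \eta = \frac{2\delta}{1+2\delta}$ for each $i \in \{1,2\}$ and all $y \in [0,1]$, so the payoff-proximity condition holds.

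Next I fix the transcript $\hat{y}_t = y^* := 1/2 + \delta$ and $y_t = 0$ for every $t \in [T]$ and identify the discontinuity of the best response at $y^*$. Under $\tilde{u}$, $\tilde{u}(a_1, y^*) = 1/2 + \delta > 1/2 - \delta = \tilde{u}(a_2, y^*)$, so $BR(\tilde{u}, y^*) = a_1$ unambiguously. Under $u$, the choice $\eta = 2\delta/(1+2\delta)$ was engineered so that $u(a_1, y^*) - u(a_2, y^*) = 2\delta - \eta(1+2\delta) = 0$; the utilities are exactly tied, and I resolve the $\argmax$ in favor of $a_2$, giving $BR(u, y^*) = a_2$. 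Hence $E_{\tilde{u}, a_1}$ is active on every round and contributes bias $y^* - 0 = 1/2 + \delta$, while $E_{u, a_1}$ is active on no round and contributes bias $0$. For the swap regret, the $\tilde{u}$-best-responding agent plays $a_1$ each round and earns $\tilde{u}(a_1, 0) = 0$; the swap function defined by $\phi(a_1) = \phi(a_2) = a_2$ would have returned $\tilde{u}(a_2, 0) = 1$ per round, so $Reg(\tilde{u}, \phi) = 1$, which is maximal given utilities in $[0,1]$.

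The main obstacle is quantitative: the bound $\frac{2\delta}{1+2\delta}$ is precisely the threshold at which the best response can be flipped at $y^*$. Because $u(a_1, \cdot)$ is affine and pinned by $u(a_1, 0) \geq 0$, the maximum downward perturbation of $u(a_1, y^*)$ from $\tilde{u}(a_1, y^*)$ is $\eta y^* = \eta(1/2+\delta)$; symmetrically, the maximum upward perturbation of $u(a_2, y^*)$ is also $\eta(1/2+\delta)$. The total perturbation budget at $y^*$ is therefore $\eta(1+2\delta)$, which must at least equal the gap $2\delta$ between $\tilde{u}(a_1, y^*)$ and $\tilde{u}(a_2, y^*)$ in order to flip the best response; this forces $\eta \geq \frac{2\delta}{1+2\delta}$. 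Setting $\eta$ to exactly this value yields only a tied (not strict) best response, which is why the construction must resolve the $\argmax$ by convention rather than by strict inequality.
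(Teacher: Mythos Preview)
Your construction is correct for the lemma as literally stated, but it takes a different route than the paper and sidesteps the point the lemma is meant to illustrate.

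The paper uses the same $\tilde u$ (threshold at $y=1/2$) but pairs it with a $u$ whose threshold is shifted to $y=1/2-\delta$, together with an \emph{alternating} transcript: $\hat y_t=1/2-\delta,\ y_t=1$ on odd rounds and $\hat y_t=1/2+\delta,\ y_t=0$ on even rounds. Under $u$ the agent plays action $1$ at every round, so $E_{u,1}$ is the \emph{entire} sequence, and the bias vanishes by cancellation between odd and even rounds; in particular the predictions are unbiased on \emph{all} of $u$'s best-response events. Under $\tilde u$ the agent alternates actions, each half of the sequence carrying bias of order $1/2+\delta$, and swap regret equals $1$.

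In your construction the transcript is constant, so the only way a bias can vanish is for the event to be empty: you obtain $\alpha(E_{u,a_1})=0$ solely because $E_{u,a_1}$ is never active (and you need a tie-breaking choice at $y^*$ to force this). Meanwhile $\alpha(E_{u,a_2})=1/2+\delta$ in your example, so the predictions are \emph{not} unbiased across $u$'s best-response partition. This still satisfies the lemma's wording, which fixes a single action $a$, but it does not exhibit the phenomenon the surrounding text describes---predictions that are unbiased for one nearby utility yet badly biased for the other. Your final paragraph, arguing that $\tfrac{2\delta}{1+2\delta}$ is a tight threshold, is specific to your particular family of perturbations (affine with $u(a_1,0)\ge 0$) and is not something the lemma asserts; it can be omitted.
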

\begin{proof}
    Consider an agent who chooses from two actions $\cA = \{0,1\}$. We will define $u$ so that the best response function induced by $u$ is a thresholding at $y = 0.5 - \delta$: an agent chooses action 1 if and only if $y \geq 0.5 - \delta$ (we assume tie-breaking in favor of action 1). Similarly, $\Tilde{u}$ will induce a thresholding at $y = 0.5$. Formally, we can write $u(a, y) = \frac{1}{1+2\delta}(a(y+\delta) + (1-a)(1-y-\delta) + \delta)$ and $\Tilde{u}(a, y) = ay + (1-a)(1-y)$. It is easy to check that both utility functions are affine in $y$. Moreover, we can compute their absolute difference: for $a=1$, we have that
        \begin{align*}
            |u(1, y) - \Tilde{u}(1, y)| = \left|\frac{y+2\delta}{1+2\delta} - y \right| = \left| \frac{2\delta}{1+2\delta} - y\left(\frac{2\delta}{1+2\delta}\right) \right| \leq \frac{2\delta}{1+2\delta}
        \end{align*}
        \item and for $a=0$, we have that
        \begin{align*}
            |u(0, y) - \Tilde{u}(0, y)| = \left|\frac{1-y}{1+2\delta} - (1-y) \right| = \left| y\left(\frac{2\delta}{1+2\delta}\right) - \frac{2\delta}{1+2\delta} \right| \leq \frac{2\delta}{1+2\delta}
        \end{align*}
    Now, consider a sequence of predictions where $\hat{y}_t = 0.5 - \delta$ for odd $t$ and $\hat{y}_t = 0.5 + \delta$ for even $t$. For odd $t$, the outcome is $y_t = 1$, and for even $t$, the outcome is $y_t = 0$. 
    
    An agent who best responds according to $u$ takes action $a_t = 1$ at every round. Thus, the bias conditioned on the best response event corresponding to $u$ is the bias taken over the entire sequence, and we can see that $\alpha(E_{u,1}) = \left|\frac{1}{T}\sum_{t=1}^T \1[BR(u, \hat{y}_t) = 1] (\hat{y}_t - y_t)\right| = 0$. On the other hand, an agent who best responds according to $\Tilde{u}$ takes action $a_t = 0$ on odd rounds and action $a_t = 1$ on even rounds. In this case, observe that the bias conditioned on the best response events corresponding to $\Tilde{u}$ is: $\alpha(E_{\Tilde{u},0}) = \alpha(E_{\Tilde{u},1}) = 0.5 + \delta$. Furthermore, $\Tilde{u}(a_t, y_t) = 0$ for all $t$, while applying a counterfactual swap would grant the agent utility 1 at every round. The claim then follows. 
\end{proof}

To circumvent this issue, we consider agents who play a \textit{continuous} approximation to their best response, which will, intuitively speaking, \textit{smooth out} the discontinuities of strictly best responding. As we will discuss, this modelling of agent behavior is well-founded in an extensive line of work on quantal choice in the economics literature. Our main result in this section is Theorem \ref{thm:quantalresponse}, which states that there is an algorithm that guarantees diminishing swap regret at a rate of $\tilde O(T^{2/3})$ for any possible agent with $k$ actions who \textit{smoothly} best responds to our forecasts, in any dimension $d$. We remark that while this result substantially improves upon what is achievable via calibrated predictions, we do not recover optimal swap regret bounds of $O(\sqrt{T})$. Along the way, however, we show (in Theorem \ref{thm:snappedutilities}) how to recover $O(\sqrt{T})$  swap regret for agents who respond in a slightly different, but perhaps less realistic, manner --- in particular, agents who act by best responding according not to their own utility function, but to the closest utility function in an appropriately defined $\delta$-cover of utility functions. We remark that, unlike in our previous results, our algorithms in this section require as input a parameter $k$ that serves as an upper bound on the number of actions available to any downstream agent to whom we give guarantees.

\subsection{Discretizing the Space of Utility Functions}

We begin by describing a suitable discretization of utility functions. Recall that to linearize a utility function that is affine over a $d$-dimensional prediction space, we can simply augment the prediction space by one dimension and consider instead the $(d+1)$-dimensional prediction space $\cC \subseteq [0,1]^{d+1}$ where every $y \in \cC$ takes value 1 in its last coordinate. We will overload notation by extending affine utility functions over a $d$-dimensional prediction space to linear functions over a $(d+1)$-dimensional prediction space in this way and henceforth restrict our attention to utility functions that are \textit{linear} in $y$. 

We will choose a discretization that approximates any possible utility function within a distance of $\delta$.

\begin{definition}[$\delta$-Cover of Utility Functions]
    Let $\cU$ be a set of utility functions. We say that $\cU_\delta$ is a $\delta$-cover of $\cU$ if for every $u \in \cU$, there exists $u_\delta \in \cU_\delta$ such that for every $a \in \cA$ and $y \in \cC$, $|u(a, y) - u_\delta(a, y)| \leq \delta$.
\end{definition}

\begin{lemma}\label{lem:cover}
    Let $\cU^k$ be the set of utility functions over an action set $\cA$ of size at most $k$ and a prediction space $\cC \subseteq [0,1]^{d+1}$ that are linear in their second argument. For some parameter $\delta > 0$, let $\delta' = \delta(d+1)$. Then, there is a $\delta'$-cover of $\cU^k$, denoted by $\cU^k_{\delta'}$, with $|\cU^k_{\delta'}| \leq \left(\frac{1}{\delta}\right)^{k(d+1)}$. 
\end{lemma}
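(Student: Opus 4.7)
The plan is to realize $\cU^k$ as a bounded subset of $\R^{k(d+1)}$ (the direct product of one coefficient vector per action) and apply a coordinate-wise $\delta$-net. Since each $u \in \cU^k$ is linear in $y$, it is determined by coefficient vectors $(v_a)_{a \in \cA}$, one per action, via $u(a, y) = \langle v_a, y \rangle$. The first substantive step is to argue that each $v_a$ lies in a box of side length $1$: in the lifted $(d+1)$-dimensional parameterization (where the last coordinate of $y$ is identically $1$), the constraint $u(a, \cdot) \in [0,1]$ together with the fact that $\cC \subseteq [0,1]^{d+1}$ is rich enough to evaluate at the relevant extreme directions pins each coordinate of $v_a$ into an interval of length at most $1$, so we may assume without loss of generality that $v_a \in [0,1]^{d+1}$.

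Having done this, I would construct $\cU^k_{\delta'}$ by independently snapping each coordinate of each $v_a$ to the nearest multiple of $\delta$ in $[0,1]$. For each action $a$, there are at most $(1/\delta)^{d+1}$ such grid vectors, and since a utility function in $\cU^k$ is determined by $k$ independent coefficient vectors (one per action), the product bound gives $|\cU^k_{\delta'}| \leq (1/\delta)^{k(d+1)}$, matching the claim.

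For the covering property, given any $u \in \cU^k$ with coefficients $\{v_a\}$, let $u_{\delta'} \in \cU^k_{\delta'}$ correspond to the snapped coefficients $\{v'_a\}$, so that $\|v_a - v'_a\|_\infty \leq \delta$ for every $a \in \cA$. Then for any $a \in \cA$ and any $y \in \cC \subseteq [0,1]^{d+1}$, by H\"older's inequality,
\[
|u(a,y) - u_{\delta'}(a,y)| = |\langle v_a - v'_a,\, y\rangle| \leq \|v_a - v'_a\|_\infty \cdot \|y\|_1 \leq \delta \cdot (d+1) = \delta',
\]
which is precisely the $\delta'$-cover condition.

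The only genuinely delicate step is the first one --- establishing that $v_a$ lives in a unit-length box, which is the only place where the particular normalization conventions of $\cC$ and the range of $u$ enter the counting bound. Everything else is an immediate coordinate-wise discretization and a single application of H\"older's inequality; I do not expect any real obstacles beyond bookkeeping.
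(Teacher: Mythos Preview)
Your proposal is correct and follows essentially the same approach as the paper: parameterize each $u$ by $k$ coefficient vectors in $[0,1]^{d+1}$, discretize each coordinate to multiples of $\delta$, and bound $|\langle v_a - v'_a, y\rangle|$ by $\|v_a - v'_a\|_\infty \cdot \|y\|_1 \leq \delta(d+1)$. The paper handles the boundedness of $v_a$ with a one-line parenthetical (``entries of $v^i$ must be bounded within $[0,1]$, since $u(i,y)$ is bounded within $[0,1]$''), whereas you correctly flag this as the only point requiring care; otherwise the arguments are identical.
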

\begin{proof}
    Consider any $u \in \cU^k$. By linearity of $u$ in its second argument, for any $i\in\cA$ we can write $u(i,y) = \<v^i, y\>$ from some $v^i \in [0,1]^{d+1}$ (note that entries of $v^i$ must be bounded within $[0,1]$, since $u(i,y)$ is bounded within $[0,1]$). In other words, since $|\cA| \leq k$, $u$ is parameterized by a set of at most $k$ vectors $v^1,...,v^{k} \in [0,1]^{d+1}$. 
    
    We construct a $\delta'$-cover of $\cU^k$ by considering a discretized set of representative vectors. Specifically, let $\cV_\delta \subseteq [0,1]^{d+1}$ be a $\delta$-net of $[0,1]^{d+1}$---that is, for every $v \in [0,1]^{d+1}$, there exists $v_\delta \in \cV_\delta$ such that $\|v - v_\delta\|_\infty \leq \delta$. Observe that we can obtain such a $\delta$-net by discretizing each coordinate to multiplies of $\delta$. Thus, we have $|\cV_\delta| \leq \left(\frac{1}{\delta}\right)^{d+1}$. Define $\cU_{\delta'}^k$ to be the set of utility functions parameterized by discretized vectors $v^1_\delta,...,v^k_\delta \in \cV_\delta$. That is, given any $u_{\delta'} \in \cU_{\delta'}^k$, we have for every $i\in\cA$, $u_{\delta'}(i, y) = \<v^i_\delta, y\>$ for some $v^i_\delta \in \cV_\delta$. 
    
    By the fact that $\cV_\delta$ is a $\delta$-net, it follows that for any $u \in \cU^k$ parameterized by $v^1,...,v^k$, we can find a nearby $u_{\delta'} \in \cU_{\delta'}^k$ parameterized by $v^1_\delta,...,v^k_\delta$ such that $\| v^i - v^i_\delta \|_\infty \leq \delta$ for all $i \in [k]$. Thus, we have that $\cU_{\delta'}^k$ is indeed a $\delta'$-cover of $\cU^k$:
    \[
    |u(i, y) - u_{\delta'}(i, y)| = |\< v^i - v^i_\delta, y \>| \leq \|v^i - v^i_\delta\|_\infty \cdot (d+1) \leq \delta (d+1)
    \]
    Moreover, we have that $|\cU_{\delta'}^k| = |\cV_\delta|^k \leq \left(\frac{1}{\delta}\right)^{k(d+1)}$, which completes the proof. 
\end{proof}

With this discretized set of utility functions in hand, our next result shows how to recover $\Tilde{O}(\sqrt{T})$ swap regret bounds under a particular assumption on agent behavior. Imagine an agent with utility function $u$ who computes their best response not according to $u$, but rather according to the utility function given by ``snapping" $u$ to a nearby $u_{\delta'}$ in the $\delta'$-cover. By definition of a $\delta'$-cover, $u_{\delta'}$ closely approximates $u$, and hence, the agent does not lose too much utility---at most $\delta'$ each round---by best responding to $u_{\delta'}$ instead of $u$. Then, as we shall see, requiring our predictions to be unbiased conditional on the best response correspondences of the $\delta'$-cover yields optimal swap regret guarantees (up to logarithmic factors) for the agent---of course as measured with respect to their real utility function $u$. 

\begin{theorem} \label{thm:snappedutilities}
    Fix a transcript $\pi_T$. Let $\cU^k$ be the set of utility functions over an action set of size at most $k$ and prediction space $\cC$ that are linear in its second argument. Let $\cU_{\delta'}^k$ be the $\delta'$-cover of $\cU^k$ given by Lemma \ref{lem:cover}, where $\delta' = \delta(d+1)$ for some parameter $\delta > 0$. Define $\cE_{\cU_{\delta'}^k}$ to be the collection of best response events corresponding to all $u_{\delta'} \in \cU_{\delta'}^k$:
    \[
    \cE_{\cU_{\delta'}^k} = \{ E_{u_{\delta'}, a}(y) = \1[BR(u_{\delta'}, y) = a] : u_{\delta'} \in \cU_{\delta'}^k, a \in [k] \}
    \]
    Consider an agent with action set $\cA$ such that $|\cA|\leq k$, and utility function $u \in \cU^k$. Suppose at every round $t \in [T]$, the agent takes action $a_t = BR(u_{\delta'}, \hat{y}_t)$ for a nearby $u_{\delta'}$ satisfying $|u(a, y) - u_{\delta'}(a, y)| \leq \delta'$ for all $a \in \cA, y \in \cC$. Then, setting $\delta = \frac{1}{(d+1)\sqrt{T}}$ and using the prediction algorithm of \citet{noarov2023highdimensional} with the set of events $\cE_{\cU_{\delta'}^k}$, the agent has swap regret bounded by:
    \[
    \max_{\phi: \cA \to \cA} Reg(u, \phi) \leq O\left( L \sqrt{\frac{|\cA|dk\ln(Tdk)}{T}} \right)
    \]
\end{theorem}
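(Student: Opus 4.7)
The proof is essentially an assembly of the machinery already built up: the key setup --- that the agent plays $a_t = BR(u_{\delta'}, \hat{y}_t)$ for some $u_{\delta'}$ in the cover --- has been arranged precisely so that the agent's realized best response events belong to the collection $\cE_{\cU_{\delta'}^k}$ for which we are already controlling bias. So the plan is to (i) reduce the swap regret of the agent (measured against $u$) to the swap regret of a hypothetical agent with utility function $u_{\delta'}$, (ii) invoke Theorem \ref{thm:swapregretbound} to bound the latter via conditional bias, and (iii) plug in the count from Lemma \ref{lem:cover} into Theorem \ref{thm:biasbound}.

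First I would argue the transfer from $u$ to $u_{\delta'}$. By the definition of a $\delta'$-cover, for every round and every $a \in \cA$ we have $|u(a,y_t) - u_{\delta'}(a,y_t)| \leq \delta'$, so for any swap rule $\phi : \cA \to \cA$,
\[
Reg(u,\phi) \;\leq\; Reg(u_{\delta'},\phi) + 2\delta'.
\]
Since we will set $\delta = 1/((d+1)\sqrt{T})$ so that $\delta' = \delta(d+1) = 1/\sqrt{T}$, this transfer term is absorbed into the final rate.

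Next, to bound $Reg(u_{\delta'}, \phi)$, observe that by assumption the agent takes action $a_t = BR(u_{\delta'},\hat{y}_t)$, which is exactly the setup of Theorem \ref{thm:swapregretbound} with utility function $u_{\delta'}$ and best response events $\{E_{u_{\delta'},a}\}_{a \in \cA} \subseteq \cE_{\cU_{\delta'}^k}$. This yields $\max_\phi Reg(u_{\delta'},\phi) \leq O\bigl(L\sqrt{|\cA|\ln(d|\cE_{\cU_{\delta'}^k}|T)/T}\bigr)$ once we plug in the concave bias bound from Theorem \ref{thm:biasbound}.

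Finally I would count events and choose $\delta$. By Lemma \ref{lem:cover}, $|\cU_{\delta'}^k| \leq (1/\delta)^{k(d+1)}$, and since each $u_{\delta'}$ contributes at most $k$ best response events, $|\cE_{\cU_{\delta'}^k}| \leq k(1/\delta)^{k(d+1)}$. With $\delta = 1/((d+1)\sqrt{T})$ we get $\ln(d|\cE_{\cU_{\delta'}^k}|T) = O(kd\ln(Tdk))$, and the discretization parameter $\eps$ in Theorem \ref{thm:biasbound} can be taken $\leq 1/\sqrt{T}$ so its additive contribution is absorbed. Combining these pieces gives
\[
\max_\phi Reg(u,\phi) \;\leq\; O\!\left(L\sqrt{\tfrac{|\cA|dk \ln(Tdk)}{T}}\right) + 2\delta' \;=\; O\!\left(L\sqrt{\tfrac{|\cA|dk \ln(Tdk)}{T}}\right),
\]
which is the claimed bound. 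There is no real obstacle here --- the theorem is essentially a bookkeeping exercise once one recognizes that restricting attention to best responses against a $\delta'$-cover ``finitizes'' the space of possible best response events so that Theorem \ref{thm:biasbound} applies with only a $\operatorname{poly}(d,k,\log T)$ blowup in $\ln|\cE|$. The only mild subtlety is ensuring that $\delta' = \delta(d+1)$ is small enough to not dominate the rate, which is precisely why we pick up the factor of $(d+1)$ in our choice of $\delta$.
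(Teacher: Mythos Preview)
Your proposal is correct and follows essentially the same approach as the paper's proof: reduce $Reg(u,\phi)$ to $Reg(u_{\delta'},\phi)$ via the $2\delta'$ transfer, apply Theorem \ref{thm:swapregretbound} with $\{E_{u_{\delta'},a}\}_{a\in\cA} \subseteq \cE_{\cU_{\delta'}^k}$, count $|\cE_{\cU_{\delta'}^k}| \leq k(1/\delta)^{k(d+1)}$ via Lemma \ref{lem:cover}, and substitute $\delta = 1/((d+1)\sqrt{T})$. The paper merely orders these steps differently (bias bound first, then swap regret, then transfer), but the content is identical.
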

\begin{proof}
    By Lemma \ref{lem:cover}, we have that $|\cU_{\delta'}^k| \leq \left(\frac{1}{\delta}\right)^{k(d+1)}$. Thus $|\cE_{\cU_{\delta'}^k}| \leq k \left(\frac{1}{\delta}\right)^{k(d+1)}$. Plugging this into Theorem \ref{thm:biasbound}, we have that for sufficiently small $\eps$, the bias conditional on $\cE_{\cU_{\delta'}^k}$ is at most:
    \[
    \alpha(n_T(E_{u_{\delta'}, a})) \leq O\left( \frac{\ln(Tdk\left(\frac{1}{\delta}\right)^{k(d+1)}) + \sqrt{\ln(Tdk\left(\frac{1}{\delta}\right)^{k(d+1)}) \cdot n_T(E_{u_{\delta'}, a}) }}{T} \right)
    \]
    Then, since our predictions are unbiased subject to the best response events of $u_{\delta'}$, Theorem \ref{thm:swapregretbound} bounds the swap regret as measured with respect to $u_{\delta'}$. For any action modification rule $\phi:\cA\to\cA$, we have:
    \[
    \sum_{t=1}^T \left( u_{\delta'} (\phi(a_t), y_t) - u_{\delta'}(a_t, y_t) \right) \leq O\left( L \sqrt{\frac{|\cA|\ln(Tdk\left(\frac{1}{\delta}\right)^{k(d+1)})}{T}} \right) \leq O\left( L \sqrt{\frac{|\cA|dk\ln(Tdk\left(\frac{1}{\delta}\right))}{T}} \right)
    \]
    The proof then follows from the fact the agent's payoff under $u$ is comparable to their payoff under $u_{\delta'}$ for any sequence of realized actions and benchmarks. We have that
    \begin{align*}
        &\max_{\phi: \cA \to \cA} \sum_{t=1}^T \left( u(\phi(a_t), y_t) - u(a_t, y_t) \right) \\
        &\leq \left( \max_{\phi: \cA \to \cA} \sum_{t=1}^T \left( u_{\delta'}(\phi(a_t), y_t) - u_{\delta'}(a_t, y_t) \right) \right) + 2\delta' \\
        &\leq O\left( L \sqrt{\frac{|\cA|dk\ln(Tdk\left(\frac{1}{\delta}\right))}{T}} \right) + 2\delta(d+1)\\
        &\leq O\left( L \sqrt{\frac{|\cA|dk\ln(Tdk)}{T}} \right)
    \end{align*}
    where in the last step we use our setting of $\delta$.
\end{proof}

\subsection{Predicting for Arbitrary Agents} \label{sec:smooth}

One might worry that it is unreasonable to expect agents to best respond using a \textit{snapped} utility function, even if it is payoff-wise close to their true utility function. Motivated by this concern, we now turn to a behavioral assumption that is  well-studied in the economics literature---namely, that agents choose their actions from a \textit{smoothed} distribution. In our main result in this section, we show how to construct a collection of events such that, using the algorithm of Theorem \ref{thm:biasbound}, downstream agents who \textit{smoothly best respond} are guaranteed diminishing swap regret at a rate of $\tilde O(T^{2/3})$.

\subsubsection{Smooth Best Response}

We first define our notion of smooth best response, which is called \textit{logistic response} in the literature. Whereas the best response function places all of the mass of an agent's response onto the utility-maximizing action, the smooth analogue assigns mass to each action in proportion to its utility. Under logistic response, an agent with action set $\cA$ and utility function $u$ plays the distribution $q(u, y) \in \Delta \cA$ where the weight placed on action $a$ is given by:
\[
q_a(u, y) = \frac{\exp(\eta \cdot u(a, y))}{\sum_{a'\in\cA} \exp(\eta \cdot u(a', y))}
\]
We should think of the parameter $\eta > 0$ as controlling the smoothness of the distribution. In particular, taking $\eta \to \infty$, logistic response coincides with strictly best responding.

As previously mentioned, smoothening the best response function harks back to an extensive line of work on quantal choice theory in the economics literature. Specifically, quantal choice models agents who best respond to utilities that are perturbed by a noise vector. The logistic response function is a popular instantiation of quantal response \citep{luce1959individual, mcfadden1976quantalchoice, mckelvey1995quantalresponse, anderson2002logit, goeree2002quantal}, where the noise vector is drawn from the Gumbel distribution which has cumulative distribution function $F(x) = \exp(-\exp(-\eta x))$; in this case, it can be shown that the probability an agent chooses an action $a$ is proportional to $\exp(\eta \cdot u(a,y))$.

Finally, note that while we prove the next result with this particular instantiation of smoothed response in mind, our result extends to any response function satisfying two conditions: first, the function gives an \textit{approximate} best response, and second, the induced weights are sufficiently \textit{Lipschitz} in utilities (the swap regret bounds will depend on the approximation factor and the Lipschitz constant). We focus on logistic response just for concreteness.

\subsubsection{Connecting Unbiased Predictions and Agent Swap Regret}

As before, to argue that agents acting according to our predictions will have diminishing swap regret according to the actual realizations, it is crucial for agents to be able to estimate their payoffs using our predictions in an unbiased way. One might hope that the technique used to derive our results in low-dimensional cases---namely, requiring our predictions to be unbiased conditional on best response events---might also apply here. However, agents now choose from a much richer action space: distributions over $\cA$. In some sense, best response events are too coarse in that they only capture the rounds on which some action was assigned the \textit{highest} probability. Instead, we would like our predictions to be unbiased conditional on the event the agent plays some distribution over $\cA$. This would be prohibitive, since there are exponentially many such distributions (up to discretization).  Fortunately, we will show that it suffices to consider only events defined by the  \textit{marginal} probabilities placed on actions---i.e. events defined by an agent placing some probability $p$ on some action $a$. This will be sufficient because of the linearity of the agents' utility functions. To construct a finite set of events, we let $S_\tau = \{0,...,\floor*{\frac{1}{\tau}}-1\}$ parameterize a bucketing of width $\tau$. Let $B^i_\tau = [i\tau, (i+1)\tau)$ for $i = 0,...,\floor*{\frac{1}{\tau}}-2$ and $B_\tau^i = [i\tau, (i+1)\tau]$ for $i = \floor*{\frac{1}{\tau}}-1$ denote the $i^{th}$ bucket. Next we define events corresponding to bucketed marginal probabilities and establish a regret bound for the general setting.  

\begin{theorem} \label{thm:quantalresponse}
    Fix a transcript $\pi_T$. Let $\cU^k$ be the set of utility functions over an action set of size at most $k$ and prediction space $\cC$ that are linear in its second argument. Let $\cU_{\delta'}^k$ be the $\delta'$-cover of $\cU^k$ given by Lemma \ref{lem:cover}, where $\delta' = \delta(d+1)$ for some parameter $\delta > 0$. Consider an agent with action set $\cA$ and utility function $u \in \cU^k$ who, at every round $t \in [T]$, plays the distribution $q(u, \hat{y}_t) \in \Delta \cA$ given by the logistic response function specified by some parameter $\eta > 0$. Let $q_{a}(u, \hat{y}_t)$ be the weight assigned to action $a$. Let $E_{u, a, i}(y) = \1[q_{a}(u, y) \in B^i_\tau]$. Define $\cE$ to be the collection of events $\cE = \{ E_{u_{\delta'}, a, i} : u_{\delta'} \in \cU_{\delta'}^k, a \in [k], i \in S_\tau \}$. Then, using the algorithm given by \citet{noarov2023highdimensional} parameterized by event set $\cE$ to make predictions, any such agent is guaranteed swap regret bounded by:
    \[
    \max_{\phi: \cA \to \cA} Reg(u, \phi) \leq O\left( \frac{|\cA|L \sqrt{Ldk \ln(TLdk)}}{T^{1/3}} \right)
    \] 
    for $\eta = O(\sqrt{T}\ln k), \delta = O\left(\frac{\ln(1/k\sqrt{T})}{d\sqrt{T}}\right)$, and $\tau = O\left(\frac{1}{kLT^{1/3}}\right)$.
\end{theorem}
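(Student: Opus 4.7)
The plan is to decompose the agent's swap regret into five pieces, each controlled by either the Lipschitz properties of softmax, a standard log-partition identity for smooth best response, or the conditional-bias guarantee of Theorem~\ref{thm:biasbound} applied to the events $\cE$.

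First I would reduce to the cover. Fix the agent with utility $u\in\cU^k$ and let $u_{\delta'}\in\cU^k_{\delta'}$ satisfy $|u(a,y)-u_{\delta'}(a,y)|\le\delta'$ by Lemma~\ref{lem:cover}. Two successive ``snappings'' reduce the analysis to $u_{\delta'}$: measuring swap regret under $u_{\delta'}$ instead of $u$ costs an additive $2T\delta'$, and replacing the agent's actual play $q(u,\hat{y}_t)$ by $q(u_{\delta'},\hat{y}_t)$ costs an additional $O(T\eta\delta')$, since the softmax with parameter $\eta$ is $O(\eta)$-Lipschitz in the utility vector and thus $\|q(u,\hat{y}_t)-q(u_{\delta'},\hat{y}_t)\|_1\le O(\eta\delta')$, while $u_{\delta'}\in[0,1]$.

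Next I would bucket the soft weights. For each $(t,a)$ let $i_{t,a}$ be the bucket with $q_a(u_{\delta'},\hat{y}_t)\in B^{i_{t,a}}_\tau$, write $\tilde{q}_{a,t}=i_{t,a}\tau$, and let $T_{a,i}=\{t:q_a(u_{\delta'},\hat{y}_t)\in B^i_\tau\}$. Swapping $q_a$ with $\tilde{q}_{a,t}$ everywhere it appears costs $O(k\tau T)$ since $|q_a-\tilde{q}_{a,t}|\le\tau$ per action. The bucketed regret then takes the form $\sum_{a,i} i\tau \sum_{t\in T_{a,i}}[u_{\delta'}(\phi(a),y_t)-u_{\delta'}(a,y_t)]$, and by linearity of $u_{\delta'}$ in $y$ together with unbiasedness of the predictions conditional on $E_{u_{\delta'},a,i}\in\cE$, replacing each $y_t$ by $\hat{y}_t$ inside the inner sum costs at most $L\cdot T\alpha(n_T(E_{u_{\delta'},a,i}))$ per action argument (we pull the inner product against $v^{\delta'}_{a'}$ through the sum and apply the $\ell_\infty$ bias guarantee). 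Once in the predicted world, I would undo the bucketing (another $O(k\tau T)$) and invoke the standard log-partition identity $\sum_a q_a(u_{\delta'},\hat{y}_t)\,u_{\delta'}(a,\hat{y}_t)\ge \max_{a'}u_{\delta'}(a',\hat{y}_t)-\log k/\eta$, which bounds the predicted-world swap regret against any $\phi$ by $T\log k/\eta$.

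Putting the pieces together, the average swap regret is $O\!\left(\delta'+\eta\delta'+k\tau+\log k/\eta+L\sum_{a,i}\alpha(n_T(E_{u_{\delta'},a,i}))\right)$. To control the bias sum I would use that every round lies in exactly one bucket event per action, so $\sum_{a,i}n_T(E_{u_{\delta'},a,i})=kT$, and Cauchy--Schwarz combined with Theorem~\ref{thm:biasbound} (using $|\cE|\le k|\cU^k_{\delta'}|/\tau$ and $\ln|\cU^k_{\delta'}|\le k(d+1)\log(1/\delta)$ from Lemma~\ref{lem:cover}) gives the contribution $O(Lk\sqrt{\ln(d|\cE|T)/(\tau T)})$. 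Plugging in the prescribed values of $\eta$, $\delta$, and $\tau$ balances all five error sources at the stated $\tilde{O}(T^{-1/3})$ average regret. The main obstacle is the bucketing step: the logistic weights vary continuously with $\hat{y}_t$, so every round activates one event per (action, bucket) pair rather than the single ``chosen action'' event of Theorem~\ref{thm:swapregretbound}, inflating $\sum n_T$ by a factor of $k$ and forcing $|\cE|$ to carry an extra $1/\tau$ factor. This is precisely what degrades the exponent from $1/2$ (as in Theorems~\ref{thm:1d} and~\ref{thm:snappedutilities}) to $2/3$, and the delicate parameter balance --- choosing $\tau$ small enough that weight-snapping is cheap ($k\tau$ vanishing) yet large enough that the bias sum does not blow up, while pushing $\eta$ large enough to kill $\log k/\eta$ without inflating $\eta\delta'$ --- is what pins down the claimed rate.
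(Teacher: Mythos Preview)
Your proposal is correct and follows essentially the same architecture as the paper's proof: reduce to a nearby $u_{\delta'}$ in the cover (the paper's Lemma~\ref{lem:quantalresponsesmooth} plays the role of your softmax-Lipschitz step, giving $|q_a(u,y)-q_a(u_{\delta'},y)|\le e^{2\eta\delta'}-1$ rather than your $O(\eta\delta')$, which is the same to first order), pass from outcomes $y_t$ to predictions $\hat{y}_t$ via the bucket events and the bias guarantee, and then close with the approximate-best-response property of logistic response (the paper's Lemma~\ref{lem:quantalresponseutility} is exactly your log-partition identity). The only stylistic differences are that the paper bundles your ``snap $q_a\to i\tau$, swap $y_t\to\hat y_t$, unsnap'' into a single weighted-bias statement (Lemma~\ref{lem:weightedbias}) and controls $\sum_i\alpha(n_T(E_{u_{\delta'},a,i}))$ by concavity of $\alpha$ (Jensen) where you invoke Cauchy--Schwarz; both routes land on the same $O\!\big(|\cA|L\sqrt{\ln(d|\cE|T)/(\tau T)}\big)$ contribution and the same final balance of parameters.
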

\begin{proof}
    Recall that the guarantees of conditional bias given by \citet{noarov2023highdimensional} depend on the number of events. By Lemma \ref{lem:cover}, we have that $|\cE| = k|S_\tau||\cU_{\delta'}^k| \leq k \floor*{\frac{1}{\tau}} \left(\frac{1}{\delta}\right)^{k(d+1)}$. Plugging this into Theorem \ref{thm:biasbound}, we can conclude that for sufficiently small $\eps$, the conditional bias on $\cE$ is at most 
    \begin{align*}
        \alpha(n_T(E_{u, a, i})) &\leq O\left( \frac{\ln(Tdk\floor*{\frac{1}{\tau}} \left(\frac{1}{\delta}\right)^{k(d+1)}) + \sqrt{\ln(Tdk\floor*{\frac{1}{\tau}} \left(\frac{1}{\delta}\right)^{k(d+1)}) \cdot n_T(E_{u, a, i})}}{T} \right) \\
        &\leq O\left( \frac{dk\ln(Tdk\floor*{\frac{1}{\tau}} \left(\frac{1}{\delta}\right)) + \sqrt{dk\ln(Tdk\floor*{\frac{1}{\tau}} \left(\frac{1}{\delta}\right)) \cdot n_T(E_{u, a, i})}}{T} \right)
    \end{align*} 
    
    The remainder of the proof is structured as follows. Recall that an agent's utility function $u$ can be closely approximated by a nearby utility function $u_{\delta'}$ in the $\delta'$-cover $\cU_{\delta'}^k$. We first measure the swap regret of the agent \textit{had they behaved and received payoff according to $u_{\delta'}$}. Here, we rely on the fact that the predictions are unbiased conditional on (bucketed) marginal probabilities assigned to each action. Thus, the agent's estimate of their utility whenever they assign a certain probability to an action is (on average) correct, and so the agent's calculation of their expected utility over the distribution they play---and any other distribution---is correct. Therefore, since the agent plays an approximate best response, the distribution they play yields no swap regret up to the bias of our predictions \textit{and} the gap introduced by the approximation.

    Of course, the agent actually behaves and receives payoff according to $u$ rather than $u_{\delta'}$. Here, we use the fact that the logistic response function is Lipschitz in utilities---in particular, since $u$ and $u_{\delta'}$ give similar utilities, $u$ and $u_{\delta'}$ induce similar distributions over actions. Thus, we can establish that the expected utility the agent obtains by responding according to $u$ cannot be too much smaller than the expected utility the agent could have obtained by responding according to $u_{\delta'}$. Similarly, the expected utility obtained by any benchmark class of actions under $u$ cannot be too much larger than the expected utility obtained by the same benchmark class under $u_{\delta'}$. This then bounds the swap regret of the agent.
    
    Thus, to prove Theorem \ref{thm:quantalresponse}, we first introduce the following intermediate lemma that bounds the swap regret as measured with respect to $u_{\delta'}$. 

    \begin{lemma}\label{lem:quantalresponse}
        Fix a utility function $u_{\delta'} \in \cU_{\delta'}^k$ and consider an agent who, at round $t$, plays the distribution $q(u_{\delta'}, \hat{y}_t)$ given by the logistic response function. Then, for any action modification rule $\phi: \cA \to \cA$, we have:
        \[
        \frac{1}{T} \sum_{t=1}^T \E_{\hat{y}_t \sim p_t} \E_{a \sim q(u_{\delta'}, \hat{y}_t)}[u_{\delta'}(\phi(a), y_t) - u_{\delta'}(a, y_t)] \leq \frac{\ln|\cA|+1}{\eta} + 2|\cA|L\floor*{\frac{1}{\tau}} \alpha\left(T/\floor*{\frac{1}{\tau}}\right) + 2|\cA|L\tau
        \]
    \end{lemma}
    The proof of Lemma \ref{lem:quantalresponse} will make use of two lemmas. The first lemma will help us establish that in expectation over the rounds in which the agent plays action $a$, our conditional bias guarantees are (approximately) preserved.
    
    \begin{lemma}\label{lem:weightedbias}
        Fix any $a \in \cA$. If $\pi_T$ results in bias at most $\alpha(n_T(E_{u_{\delta'}, a, i}))$ conditional on the event $E_{u_{\delta'}, a, i}$ for all $i \in S_\tau$, then we have:
        \[
        \left\| \frac{1}{T} \sum_{t=1}^T \sum_{i\in S_\tau} \E_{\hat{y_t}\sim p_t}[E_{u_{\delta'}, a, i}(\hat{y_t}) q_a(u_{\delta'}, \hat{y}_t) (\hat{y_t} - y_t)] \right\|_\infty \leq \floor*{\frac{1}{\tau}} \alpha\left(T/\floor*{\frac{1}{\tau}}\right) + \tau
        \]
    \end{lemma}
    \begin{proof}
        Observe that
        \begin{align*}
            & \left\| \frac{1}{T} \sum_{t=1}^T \sum_{i\in S_\tau} \E_{\hat{y_t}\sim p_t}[E_{u_{\delta'}, a, i}(\hat{y_t}) q_a(u_{\delta'}, \hat{y}_t) (\hat{y_t} - y_t)] \right\|_\infty \\
            &= \Bigg\| \frac{1}{T} \sum_{t=1}^T \sum_{i\in S_\tau} \E_{\hat{y_t}\sim p_t}[\1[(\hat{y_t} - y_t) \geq 0] \1[q_a(u_{\delta'}, \hat{y}_t) \in B_\tau^i] q_a(u_{\delta'}, \hat{y}_t) (\hat{y_t} - y_t) \\
            & \hspace{7em} + \1[(\hat{y_t} - y_t) < 0] \1[q_a(u_{\delta'}, \hat{y}_t) \in B_\tau^i] q_a(u_{\delta'}, \hat{y}_t) (\hat{y_t} - y_t)] \Bigg\|_\infty \\
            & \leq \Bigg\| \frac{1}{T} \sum_{t=1}^T \sum_{i\in S_\tau} \E_{\hat{y_t}\sim p_t}[\1[(\hat{y_t} - y_t) \geq 0] \1[q_a(u_{\delta'}, \hat{y}_t) \in B_\tau^i] \cdot ((i+1)\tau) (\hat{y_t} - y_t) \\
            & \hspace{7em} + \1[(\hat{y_t} - y_t) < 0] \1[q_a(u_{\delta'}, \hat{y}_t) \in B_\tau^i] \cdot (i\tau) (\hat{y_t} - y_t)] \Bigg\|_\infty \\
            &\leq \left\| \frac{1}{T} \sum_{t=1}^T \sum_{i\in S_\tau} \E_{\hat{y_t}\sim p_t}[\1[q_a(u_{\delta'}, \hat{y}_t) \in B_\tau^i]\cdot (i\tau) (\hat{y_t} - y_t) + \1[q_a(u_{\delta'}, \hat{y}_t) \in B_\tau^i] \cdot \tau] \right\|_\infty \\
            &= \left\| \frac{1}{T} \sum_{i\in S_\tau} i\tau  \sum_{t=1}^T \E_{\hat{y_t}\sim p_t}\left[ \1[q_a(u_{\delta'}, \hat{y}_t) \in B_\tau^i] (\hat{y_t} - y_t)\right] \right\|_\infty + \frac{\tau}{T} \sum_{t=1}^T \E_{\hat{y_t}\sim p_t}\left[\sum_{i\in S_\tau} \1[q_a(u_{\delta'}, \hat{y}_t) \in B_\tau^i] \right]  \\
            &\leq \sum_{i\in S_\tau} i\tau \cdot \frac{1}{T} \left\| \sum_{t=1}^T \E_{\hat{y_t}\sim p_t}\left[ \1[q_a(u_{\delta'}, \hat{y}_t) \in B_\tau^i] (\hat{y_t} - y_t)\right] \right\|_\infty + \frac{\tau}{T} \sum_{t=1}^T \E_{\hat{y_t}\sim p_t}\left[\sum_{i\in S_\tau} \1[q_a(u_{\delta'}, \hat{y}_t) \in B_\tau^i] \right]  \\
            &\leq \sum_{i\in S_\tau}\alpha(n_T(E_{u_{\delta'},a,i})) + \tau
        \end{align*}
        The last inequality follows from our assumption of $\alpha$-unbiasedness, the fact that $i\tau \leq 1$ for all $i$, and the fact that for any utility function $u_{\delta'}$ and action $a$, the events $\{E_{u_{\delta'},a,i}\}_{i\in S_\tau}$ are disjoint---the indicator $\1[q_a(u_{\delta'}, \hat{y}_t) \in B_\tau^i] = 1$ for exactly one $i \in S_\tau$, i.e. $\sum_{i\in S_\tau} E_{u_{\delta'},a,i}(\hat{y}_t) = 1$. Thus, it also follows that $\sum_{i\in S_\tau} n_T(E_{u_{\delta'},a,i}) = T$, and so by concavity of $\alpha$, we have that $\sum_{i\in S_\tau}\alpha(n_T(E_{u_{\delta'},a,i})) \leq |S_\tau| \alpha((\sum_{i\in S_\tau}n_T(E_{u_{\delta'},a,i}))/|S_\tau|) = |S_\tau| \alpha(T/|S_\tau|)$, which completes the proof.
    \end{proof}

    The next lemma establishes that the utility gained by playing the logistic response distribution cannot be too much smaller than the utility gained by playing the best response action (note that this is a general statement that applies to any utility function). The difference between the two quantities will be controlled by the parameter $\eta$. 

    \begin{lemma}\label{lem:quantalresponseutility}
        Fix any utility function $u$ and any $y \in \cC$. Let $a^* = BR(u, y)$. We have that:
        \[
        \E_{a\sim q(u, y)}[u(a, y)] \geq u(a^*, y) - \frac{\ln|\cA| + 1}{\eta}
        \]
    \end{lemma}
    \begin{proof}
        First, we see that for any constant $x$,
        \begin{align*}
            \Pr[u(a,y) \leq x] &\leq \frac{\Pr[u(a,y) \leq x]}{\Pr[u(a,y) = u(a^*,y)]} \leq \frac{|\cA|\exp(\eta x)}{\exp(\eta u(a^*, y)} = |\cA| \exp(\eta(x - u(a^*,y)))
        \end{align*}
        Taking $x = u(a^*,y) - \frac{1}{\eta}(\ln|\cA| + c)$, we have that
        \begin{align*}
            \Pr[u(a,y) \leq u(a^*,y) - \frac{1}{\eta}(\ln|\cA| + c)] \leq |\cA| e^{-\ln|\cA|-c} = e^{-c}
        \end{align*}
        or equivalently, $\Pr[\eta(u(a^*,y) - u(a,y)) \geq \ln|\cA|+c] \leq e^{-c}$. Let $X = \eta(u(a^*,y) - u(a,y))$. $X$ is a non-negative random variable, so we have
        \begin{align*}
            \E[X] = \int_0^\infty \Pr[X \geq x] dx = \int_{c = -\ln|\cA|}^\infty \Pr[X \geq \ln|\cA| + c] dc \leq \int_{c = -\ln|\cA|}^0 1 dc + \int_0^\infty e^{-c} dc = \ln|\cA| + 1
        \end{align*}
        where the inequality uses the fact that  $\Pr[X \geq \ln|\cA|+c] \leq 1$ for $c \leq 0$. Hence it follows that $u(a^*,y) - \E[u(a,y)] \leq \frac{\ln|\cA|+1}{\eta}$, which proves the lemma.
    \end{proof}

    We are now ready to prove Lemma \ref{lem:quantalresponse}. 

    \begin{proofof}{Lemma \ref{lem:quantalresponse}}
        Fix any action modification rule $\phi: \cA \to \cA$. Using linearity of $u_{\delta'}$ in the second argument, we derive:
        \begin{align*}
            & \frac{1}{T} \sum_{t=1}^T \E_{\hat{y}_t \sim p_t} \E_{a \sim q(u_{\delta'}, \hat{y}_t)}[u_{\delta'}(\phi(a), y_t) - u_{\delta'}(a, y_t)] \\
            &= \sum_{a\in\cA} \frac{1}{T} \sum_{t=1}^T \sum_{i \in S_\tau} \E_{\hat{y}_t \sim p_t} \left[ E_{u_{\delta'},a,i}(\hat{y}_t) q_a(u_{\delta'}, \hat{y}_t) (u_{\delta'}(\phi(a), y_t) - u_{\delta'}(a, y_t)) \right] \\
            &= \sum_{a\in\cA} \Bigg( u_{\delta'}(\phi(a), \frac{1}{T} \sum_{t=1}^T \sum_{i \in S_\tau} \E_{\hat{y}_t \sim p_t}[E_{u_{\delta'},a,i}(\hat{y}_t) q_a(u_{\delta'}, \hat{y}_t)]y_t) \\
            & \hspace{2.5em} - u_{\delta'}(a, \frac{1}{T} \sum_{t=1}^T \sum_{i \in S_\tau} \E_{\hat{y}_t \sim p_t}[E_{u_{\delta'},a,i}(\hat{y}_t) q_a(u_{\delta'}, \hat{y}_t)]y_t) \Bigg) \\
            &\leq \Bigg( \sum_{a\in\cA} \Bigg( u_{\delta'}(\phi(a), \frac{1}{T} \sum_{t=1}^T \sum_{i \in S_\tau} \E_{\hat{y}_t \sim p_t}[E_{u_{\delta'},a,i}(\hat{y}_t) q_a(u_{\delta'}, \hat{y}_t) \hat{y}_t]) \\
            & \hspace{3em} - u_{\delta'}(a, \frac{1}{T} \sum_{t=1}^T \sum_{i \in S_\tau} \E_{\hat{y}_t \sim p_t}[E_{u_{\delta'},a,i}(\hat{y}_t) q_a(u_{\delta'}, \hat{y}_t) \hat{y}_t])\Bigg) \Bigg) + 2|\cA|L\floor*{\frac{1}{\tau}} \alpha\left(T/\floor*{\frac{1}{\tau}}\right) + 2|\cA|L\tau \\
            &= \left( \sum_{a\in\cA} \frac{1}{T} \sum_{t=1}^T \sum_{i \in S_\tau} \E_{\hat{y}_t \sim p_t} \left[ E_{u_{\delta'},a,i}(\hat{y}_t) q_a(u_{\delta'}, \hat{y}_t) (u_{\delta'}(\phi(a), \hat{y}_t) - u_{\delta'}(a, \hat{y}_t)) \right] \right) \\
            & \hspace{1em} + 2|\cA|L\floor*{\frac{1}{\tau}} \alpha\left(T/\floor*{\frac{1}{\tau}}\right) + 2|\cA|L\tau \\
            &= \left( \frac{1}{T} \sum_{t=1}^T \E_{\hat{y}_t \sim p_t} \E_{a \sim q(u_{\delta'}, \hat{y}_t)}\left[ u_{\delta'}(\phi(a), \hat{y}_t) - u_{\delta'}(a, \hat{y}_t) \right] \right) + 2|\cA|L\floor*{\frac{1}{\tau}} \alpha\left(T/\floor*{\frac{1}{\tau}}\right) + 2|\cA|L\tau \\
            \intertext{which, by the fact that the benchmark action $\phi(a)$ cannot obtain higher utility than the best response at round $t$, letting $a^*_t = BR(u_{\delta'}, \hat{y}_t)$, this quantity is at most}
            &\leq \left( \frac{1}{T} \sum_{t=1}^T \E_{\hat{y}_t \sim p_t}[u_{\delta'}(a^*_t, \hat{y}_t) - \E_{a \sim q(u_{\delta'}, \hat{y}_t)}\left[ u_{\delta'}(a, \hat{y}_t) \right] ] \right) + 2|\cA|L\floor*{\frac{1}{\tau}} \alpha\left(T/\floor*{\frac{1}{\tau}}\right) + 2|\cA|L\tau \\
            &\leq \left( \frac{1}{T} \sum_{t=1}^T \frac{\ln|\cA|+1}{\eta} \right) + 2|\cA|L\floor*{\frac{1}{\tau}} \alpha\left(T/\floor*{\frac{1}{\tau}}\right) + 2|\cA|L\tau \\
            &= \frac{\ln|\cA|+1}{\eta} + 2|\cA|L\floor*{\frac{1}{\tau}} \alpha\left(T/\floor*{\frac{1}{\tau}}\right) + 2|\cA|L\tau
        \end{align*}
        Here we apply Lemma \ref{lem:weightedbias} and $L$-Lipschitzness of $u_{\delta'}$ to get the first inequality: for any action $a$,
        \begin{align*}
            &\left|u_{\delta'}(a, \frac{1}{T} \sum_{t=1}^T \sum_{i \in S_\tau} \E_{\hat{y}_t \sim p_t}[E_{u_{\delta'},a,i}(\hat{y}_t) q_a(u_{\delta'}, \hat{y}_t) \hat{y}_t]) - u_{\delta'}(a, \frac{1}{T} \sum_{t=1}^T \sum_{i \in S_\tau} \E_{\hat{y}_t \sim p_t}[E_{u_{\delta'},a,i}(\hat{y}_t) q_a(u_{\delta'}, \hat{y}_t)]y_t)\right| \\
            &\leq L \left\| \frac{1}{T} \sum_{t=1}^T \sum_{i \in S_\tau} \E_{\hat{y}_t \sim p_t}[E_{u_{\delta'},a,i}(\hat{y}_t) q_a(u_{\delta'}, \hat{y}_t) (\hat{y}_t - y_t)] \right\|_\infty \\
            &\leq L \floor*{\frac{1}{\tau}} \alpha\left(T/\floor*{\frac{1}{\tau}}\right) + L\tau
        \end{align*}
        Finally, we apply Lemma \ref{lem:quantalresponseutility} to get the last inequality. This completes the proof.
    \end{proofof}

    Before proving Theorem \ref{thm:quantalresponse}, we introduce one more lemma which states that nearby utility functions induce nearby distributions over actions---in particular, perturbing the payoff of any action by a little bit does not change its assigned weight by too much.

        \begin{lemma}\label{lem:quantalresponsesmooth}
            Fix any $a\in\cA, y\in\cC$. For any utility functions $u_1, u_2$, if $|u_1(a,y) - u_2(a,y)| \leq \delta$, then $|q_a(u_1, y) - q_a(u_2, y)| \leq e^{2\eta\delta} - 1$. 
        \end{lemma}
        \begin{proof}
            W.l.o.g. assume that $u_1(a,y) \geq u_2(a,y)$.  Then we have that
            \begin{align*}
                \frac{q_a(u_1, y)}{q_a(u_2, y)} &= \frac{\left(\frac{\exp(\eta u_1(a,y))}{\sum_{a'\in\cA} \exp(\eta u_1(a',y))}\right)}{\left(\frac{\exp(\eta u_2(a,y))}{\sum_{a'\in\cA} \exp(\eta u_2(a',y))}\right)} \\
                &= \exp(\eta (u_1(a,y) - u_2(a,y))) \cdot \left(\frac{\sum_{a'\in\cA} \exp(\eta u_2(a',y))}{\sum_{a'\in\cA} \exp(\eta u_1(a',y))}\right) \\
                &\leq \exp(\eta\delta) \cdot \left(\frac{\sum_{a'\in\cA} \exp(\eta (u_1(a',y)+\delta)}{\sum_{a'\in\cA} \exp(\eta u_1(a',y))}\right) \\
                &= \exp(\eta\delta) \cdot \exp(\eta\delta) \\
                &= \exp(2\eta\delta)
            \end{align*}
            Rearranging the expression, we get $q_a(u_1, y) \leq \exp(2\eta\delta) q_a(u_2, y)$ and hence, since $q_a(u_2, y) \leq 1$, we have that
            \[
            q_a(u_1, y) - q_a(u_2, y) \leq \exp(2\eta\delta) q_a(u_2, y) - q_a(u_2, y) \leq \exp(2\eta\delta) - 1
            \]
            which proves the lemma. 
        \end{proof}

    We now proceed with the proof of Theorem \ref{thm:quantalresponse}. Recall that by definition of a $\delta'$-net, we can find $u_{\delta'} \in \cU_{\delta'}^k$ such that $|u(a,y) - u_{\delta'}(a,y)| \leq \delta'$ for all $a\in\cA, y\in\cC$. Hence, together with Lemma \ref{lem:quantalresponsesmooth}, we can relate the cumulative utility of the agent to their cumulative utility if they instead had utility function $u_{\delta'}$---for both the realized and benchmark sequence of actions. An application of Lemma \ref{lem:quantalresponse} will then bound the agent's swap regret. In particular, for any $\phi: \cA\to\cA$, we have that 
        \begin{align*}
            Reg(u, \phi) &= \frac{1}{T} \sum_{t=1}^T \E_{\hat{y}_t \sim p_t} \E_{a \sim q(u, \hat{y}_t)}[u(\phi(a), y_t) - u(a, y_t)] \\
            &\leq \left( \frac{1}{T} \sum_{t=1}^T \E_{\hat{y}_t \sim p_t} \E_{a \sim q(u, \hat{y}_t)}[u_{\delta'}(\phi(a), y_t) - u_{\delta'}(a, y_t)] \right) + 2\delta' \\
            &= \left( \frac{1}{T} \sum_{t=1}^T \sum_{a\in\cA} \E_{\hat{y}_t \sim p_t} [q_a(u, \hat{y}_t) (u_{\delta'}(\phi(a), y_t) - u_{\delta'}(a, y_t))] \right) + 2\delta' \\
            &\leq \left( \frac{1}{T} \sum_{t=1}^T \sum_{a\in\cA} \E_{\hat{y}_t \sim p_t} [q_a(u_{\delta'}, \hat{y}_t) (u_{\delta'}(\phi(a), y_t) - u_{\delta'}(a, y_t))] \right) + |\cA|(e^{2\eta\delta'} - 1) + 2\delta' \\
            &= \left( \frac{1}{T} \sum_{t=1}^T \E_{\hat{y}_t \sim p_t} \E_{a \sim q(u_{\delta'}, \hat{y}_t)}[u_{\delta'}(\phi(a), y_t) - u_{\delta'}(a, y_t)] \right) + |\cA|(e^{2\eta\delta'} - 1) + 2\delta' \\
            &\leq \frac{\ln|\cA|+1}{\eta} + 2|\cA|L\floor*{\frac{1}{\tau}} \alpha\left(T/\floor*{\frac{1}{\tau}}\right) + 2|\cA|L\tau + |\cA|(e^{2\eta\delta'} - 1) + 2\delta' \\
            &\leq \frac{\ln|\cA|+1}{\eta} + O\left( |\cA|L \sqrt{\frac{\floor*{\frac{1}{\tau}} dk \ln(Tdk\floor*{\frac{1}{\tau}} \left(\frac{1}{\delta}\right))}{T}} \right) + 2|\cA|L\tau + |\cA|(e^{2\eta\delta'} - 1) + 2\delta'
        \end{align*}
        where the first inequality follows from our choice of $u_{\delta'}$, the second inequality follows from Lemma \ref{lem:quantalresponsesmooth}, and the last inequality follows from Lemma \ref{lem:quantalresponse}. In the last step, we substitute in the conditional bias bound established previously. For $\eta = (\ln k + 1)\sqrt{T}, \delta = \frac{\ln((1/k\sqrt{T})+1)}{(d+1)\sqrt{T}}, \tau = \frac{1}{kLT^{1/3}}$, we get
        \[
        Reg(u, \phi) \leq O\left( \frac{|\cA|L \sqrt{Ldk \ln(TLdk)}}{T^{1/3}} \right)
        \]

\end{proof}

\section{Discussion and Conclusion}
We have shown that it is possible to make predictions that guarantee diminishing swap regret to \emph{all} downstream agents simultaneously, at rates that are substantially faster than are possible by making ($\ell_1$)-calibrated predictions. In particular, for best responding agents, we have shown how to do this at the optimal $O(\sqrt{T})$ rate for $d = 1$ --- a rate that is impossible to obtain via calibration. For agents that are assumed to best respond according to a discretized version of their own utility function, we've shown how to obtain these optimal $O(\sqrt{T})$ rates for \emph{every} dimension $d$, and for agents that \emph{smoothly} best respond according to their own utility function, we've shown how to guarantee all downstream agents swap regret at the dimension-independent rate $O(T^{2/3})$ in $d$ dimensions. Thus we have established---perhaps surprisingly---that a substantially weaker condition than calibration suffices, even when the goal is to guarantee downstream swap regret simultaneously for all agents. Prior work was able to obtain similar guarantees only by either relaxing from swap to external regret \citep{kleinberg2023ucalibration} or by restricting to a fixed collection of downstream agents, rather than offering guarantees simultaneously to all downstream agents \citep{noarov2023highdimensional}. It is natural to want to use techniques emerging from this literature to coordinate players in a repeated game to play correlated equilibria. In such settings, the state being predicted at each round is a function of the actions chosen by all of the players in the game at that round. The results from this literature do not directly apply to this case, because when the state is a function of player actions, which are themselves a function of the predictions of the forecaster, the state depends on the realized randomness of the forecaster, which is not allowed in the adversarial model in which the theorems are proven. A simple fix is to find a single distribution on forecasts for all downstream agents, but take independent samples from this distribution to transmit to each agent in the game. This works well with our approach, as the computationally difficult step is computing the sampling distribution, which only needs to be done once. 

Unlike the bounds of \cite{kleinberg2023ucalibration}, our bounds have a dependence on the number of actions available to the downstream agent. However, in the special case of $d = 1$, \cite{hu2024predict} have shown how to remove this dependence. This suggests a natural open problem: For $d > 1$, is a dependence on the number of actions necessary in order to obtain dimension-independent swap regret bounds, or can it be removed? 

Although our approach is computationally efficient for $d=1$, like algorithms promising calibration, the computational complexity of our approach scales badly with $d$. We leave as the main open question from our work:
\begin{quote}
    Is there an algorithm that can make $d$ dimensional predictions that guarantee all downstream agents swap regret diminishing at a rate of $\tilde O(T^{O(1)})$ with per-round running time scaling polynomially with $d$?
\end{quote}

\subsection*{Acknowledgements}
We thank Georgy Noarov for enlightening conversations and for pointing us to \citet{ivic1994digital}. This work was supported in part by the Simons Collaboration on the Theory of Algorithmic Fairness, NSF grants FAI-2147212 and CCF-2217062, an AWS AI Gift for Research on Trustworthy AI, and the Hans Sigrist Prize.

\bibliographystyle{ACM-Reference-Format}
\bibliography{main}

\appendix
\newcommand{\calerr}{\textsf{CalErr}}

\section{Calibration and Swap Regret} \label{app:calibration}

Here, we discuss the relationship between calibration and agent swap regret, and connections to our results. For the remainder of this section, we focus on the 1-dimensional, binary prediction problem --- i.e. the outcome space is $\cC = \{0, 1\}$, and predictions $p_t$ take values in $[0, 1]$.

\subsection{Definitions and Best-Known Bounds for Calibration}

We first define the $\ell_1$ version of calibration error. 

\begin{definition}[$\ell_1$-Calibration Error]
    Given a transcript $\pi_T$ of predictions $\hat{y}_1,...,\hat{y}_T$ and outcomes $y_1,...,y_T$, the $\ell_1$-calibration error is:
    \[
    \calerr_1(\pi_T) = \sum_{y\in[0,1]} n_T(y) |y - \Tilde{y}_T|
    \]
    where $n_T(y) = \sum_{t=1}^T \1[\hat{y}_t = y]$ is the number of times $y$ was predicted, and $\Tilde{y}_T = \frac{\sum_{t=1}^T \1[\hat{y}_t = y] y_t}{n_T(y)}$ is the average outcome conditioned on the prediction being $y$.  
\end{definition}

\citet{foster1998asymptotic} give an algorithm achieving $\ell_1$-calibration error $O(T^{2/3})$. As a lower bound, \citet{qiao2021stronger} show that it is impossible to achieve $\ell_1$-calibrated forecasts at a rate of $\Tilde{O}(\sqrt{T})$. 

\begin{theorem}\citep{qiao2021stronger}
For any forecasting algorithm, there is an adversary such that the algorithm incurs $\ell_1$-calibration error $\Omega(T^{0.528})$. 
\end{theorem}

The $\ell_2$ variant of calibration error does not suffer the same difficulty. 

\begin{definition}[$\ell_2$-Calibration Error]
    Given a transcript $\pi_T$ of predictions $\hat{y}_1,...,\hat{y}_T$ and outcomes $y_1,...,y_T$, the $\ell_2$-calibration error is:
    \[
    \calerr_2(\pi_T) = \sum_{y\in[0,1]} n_T(y) (y - \Tilde{y}_T)^2
    \]
\end{definition}

\begin{theorem}\citep{roth2022uncertain}
    There is a forecasting algorithm that achieves $\Tilde{O}(\sqrt{T})$ $\ell_2$-calibration error.
\end{theorem}

\subsection{Relationship to Swap Regret}

Calibration is related to swap regret in two directions: low calibration error implies low swap regret for all downstream agents, and low swap regret for all downstream agents implies low calibration error. However, each direction of this connection uses a different measure of calibration error which is the reason why the connection is not tight.

Concretely, it is known that $\ell_1$-calibrated predictions lead to no swap regret for best-responding agents. 

\begin{theorem}\citep{foster1998asymptotic, haghtalab2023calibrated, kleinberg2023ucalibration}
For any transcript $\pi_T$ of predictions and outcomes, and any agent with action set $\cA$ and utility function $u$ who, at every round $t\in[T]$, best responds to $\hat{y}_t$, we have:
\[
\max_{\phi: \cA \to \cA} Reg(u, \phi) \leq \frac{4\calerr_1(\pi_T)}{T}
\]
\end{theorem}

In addition, low $\ell_2$-calibration error is necessary to guarantee all downstream agents low swap regret; in particular, it is possible to design an agent such that their swap regret is exactly the $\ell_2$-calibration error of our predictions.

\begin{theorem}\citep{kleinberg2023ucalibration} \label{ell_2 cal}
    Given any transcript $\pi_T$ of predictions and outcomes, there exists an agent with action set $\cA$ and utility function $u$ such that if, at every round $t\in[T]$, they best respond to $\hat{y}_t$, we have:
    \[
    \max_{\phi: \cA \to \cA} Reg(u, \phi) \geq \frac{\calerr_2(\pi_T)}{T}
    \]
\end{theorem}

In this paper we show how to minimize agent swap regret without minimizing $\ell_1$-calibration; our results rely instead on $\ell_\infty$-type bounds on conditional bias over best response events. It is worth mentioning that, while this technique allows us to consider the more general setting of real-valued, higher-dimensional outcome spaces, we acquire a dependence on the number of actions $|\cA|$ of the agent that renders their swap regret guarantee meaningless if $|\cA|$ is large, i.e. $\Omega(\sqrt{T})$ (this is the case for the agent constructed in Theorem \ref{ell_2 cal}). An important open question is if we can achieve \textit{action-independent} swap regret bounds for all downstream agents. 

For binary prediction, it turns out the answer is yes; \citet{hu2024predict} show in this case how to remove our dependence on the action space and obtain an action-independent $\Tilde{O}(\sqrt{T})$ guarantee for agent swap regret. 

\begin{theorem}\citep{hu2024predict}
    There is a forecasting algorithm such that any agent who best responds according to action set $\cA$ and utility function $u$ has swap regret bounded by:
    \[
    \max_{\phi: \cA \to \cA} Reg(u, \phi) \leq \Tilde{O}(\sqrt{T})
    \]
\end{theorem}

In other words, guaranteeing no swap regret for downstream agents is as hard as producing $\ell_2$-calibrated forecasts (which we can achieve at $\sqrt{T}$ rates). It is \emph{not}  as hard as producing $\ell_1$-calibrated forecasts, for which there exists a lower bound precluding $O(\sqrt{T})$ rates. This distinction between the rates achievable for $\ell_1$ and $\ell_2$ calibration is what resolves a seeming contradiction in the results of \citet{kleinberg2023ucalibration} and the results of our paper and \citet{hu2024predict}. 
\section{Algorithm for Unbiased Prediction}
\label{app:alg}

Here, we present the forecasting algorithm of \citet{noarov2023highdimensional} that gives guarantees of conditional unbiasedness as stated in Theorem \ref{thm:biasbound}. Recall that given a collection of events $\cE$, the goal is to produce predictions such that the bias conditional on every event $E\in\cE$ is low. 

The algorithm is simple to state and can be decomposed into two parts. The first part reduces to an experts learning problem---the algorithm computes a set of weights $q_t \in \Delta [2d|\cE|]$ by calling an experts algorithm, feeding in as losses the cumulative event biases. In the second part, it chooses $p_t$ by solving a minmax problem in which the objective is a convex combination of event biases, with coefficients given by $q_t$. We note that in the first part, any experts algorithm can be used in a black-box manner; for simplicity, we present the algorithm instantiated with the Exponential Weights algorithm with learning rate $\eta$. The theorems of \citet{noarov2023highdimensional} follow from instantiating the algorithm with more sophisticated no regret learning algorithms. Note that the $\min\max$ problem involves a very high dimensional action space for the minimization player, and so it is not a-priori clear how to solve it efficiently. However, \citet{noarov2023highdimensional} show that it can always be solved efficiently by taking advantage of the linear structure of the loss function in the maximization player's action, and the fact that the game has value 0. \citet{noarov2023highdimensional} show how to solve this minmax problem by having in rounds the maximization player choose their $d$-dimensional action using an online linear optimization algorithm, and having the minimization player ``best respond'' by copying the maximization player's strategy. 

\ifarxiv
\begin{algorithm}
\caption{Algorithm for Unbiased Prediction}
\begin{algorithmic}
\For{$t = 1...T$}
    \State Define the distribution $q_t \in \Delta [2d|\cE|]$ such that for $E \in \cE, i\in[d], \sigma\in \{\pm 1\}$,
    \[
    q_t^{E, i, \sigma} \propto \exp\left( \frac{\eta}{2} \sum_{s=1}^{t-1} \sigma \cdot \E_{\hat{y}_s \sim p_s}[E(\hat{y}_s) (\hat{y}_s^i - y_s^i)] \right)
    \]
    \State Output the solution to the minmax problem:
    \[
    p_t \gets \argmin_{p_t' \in \Delta\cC} \max_{y \in \cC} \E_{\hat{y}_t \sim p_t'}\left[\sum_{E, i, \sigma} q_t(E, i, \sigma) \cdot \sigma \cdot E(\hat{y}_s) \cdot (\hat{y}_s^i - y_s^i) \right]
    \]
\EndFor
\end{algorithmic}
\end{algorithm}

\else
\begin{algorithm}
\caption{Algorithm for Unbiased Prediction}
    \For{$t=1...T$}{
    Define the distribution $q_t \in \Delta [2d|\cE|]$ such that for $E \in \cE, i\in[d], \sigma\in \{\pm 1\}$,
    \[
    q_t^{E, i, \sigma} \propto \exp\left( \frac{\eta}{2} \sum_{s=1}^{t-1} \sigma \cdot \E_{\hat{y}_s \sim p_s}[E(\hat{y}_s) (\hat{y}_s^i - y_s^i)] \right)
    \]

    Output the solution to the minmax problem:
    \[
    p_t \gets \argmin_{p_t' \in \Delta\cC} \max_{y \in \cC} \E_{\hat{y}_t \sim p_t'}\left[\sum_{E, i, \sigma} q_t(E, i, \sigma) \cdot \sigma \cdot E(\hat{y}_s) \cdot (\hat{y}_s^i - y_s^i) \right]
    \]
    }
\end{algorithm}
\fi

\end{document}